\newcommand{\Oh}{\mathcal{O}}
\newcommand{\NP}{\mbox{NP}}
\newcommand{\PP}{\mbox{P}}
\newcommand{\xig}{\xi_{\mbox{\tiny $G$}}}
\title{Computing the largest bond of a graph}
\titlerunning{Computing the largest bond of a graph}
\author{Gabriel L. Duarte}{Fluminense Federal University, RJ, Brazil}{gabrield@id.uff.br}{}{}
\author{Daniel Lokshtanov}{University of California Santa Barbara, CA, USA}{daniello@ucsb.edu}{}{}
\author{Lehilton L. C. Pedrosa}{University of Campinas, SP, Brazil}{lehilton@ic.unicamp.br}{https://orcid.org/0000-0003-1001-082X}{}
\author{Rafael C. S. Schouery}{University of Campinas, SP, Brazil}{rafael@ic.unicamp.br}{https://orcid.org/0000-0002-0472-4810}{}
\author{Uéverton S. Souza\footnote{corresponding author}}{Fluminense Federal University, RJ, Brazil}{ueverton@ic.uff.br}{https://orcid.org/0000-0002-5320-9209}{}
\authorrunning{G. L. Duarte, D. Lokshtanov, L. L. C. Pedrosa, R. C. S. Schouery and U. S. Souza}
\keywords{bond, cut, maximum cut, connected cut, FPT, treewidth, clique-width}
\begin{document}

\maketitle

\begin{abstract}
A bond of a graph $G$ is an inclusion-wise minimal disconnecting set of $G$, i.e., bonds are cut-sets that determine cuts $[S,V\setminus S]$ of $G$ such that $G[S]$ and $G[V\setminus S]$ are both connected. Given $s,t\in V(G)$, an $st$-bond of $G$ is a bond whose removal disconnects $s$ and $t$.
Contrasting with the large number of studies related to maximum cuts, there are very few results regarding the largest bond of general graphs.
In this paper, we aim to reduce this gap on the complexity of computing the largest bond and the largest $st$-bond of a graph.
Although cuts and bonds are similar, we remark that computing the largest bond of a graph tends to be harder than computing its maximum cut.
We show that {\sc Largest Bond} remains \NP-hard even for planar bipartite graphs, and it does not admit a constant-factor approximation algorithm, unless $\PP = \NP$.
We also show that {\sc Largest Bond} and {\sc Largest $st$-Bond} on graphs of clique-width $w$ cannot be solved in time $f(w)\times n^{o(w)}$ unless the
Exponential Time Hypothesis
fails, but they can be solved in time $f(w)\times n^{O(w)}$. In addition, we show that both problems are fixed-parameter tractable when parameterized by the size of the solution, but they do not admit polynomial kernels unless NP $\subseteq$ coNP/poly.

\end{abstract}

\section{Introduction}

Let $G=(V,E)$ be a simple, connected, undirected graph.  A \emph{disconnecting set} of $G$ is a set of edges $F\subseteq E(G)$ whose removal disconnects $G$. The edge-connectivity of $G$ is
$\kappa'(G) = \min\{|F|: \mbox{$F$ is a disconnecting set of $G$}\}$.
A cut $[S,T]$ of $G$ is a partition of $V$ into two subsets $S$ and $T=V\setminus S$. The cut-set $\partial(S)$ of a cut $[S,T]$ is the set of edges that have one endpoint in $S$ and the other endpoint in $T$; these edges are said to cross the cut. In a connected graph, each cut-set determines a unique cut.
Note that every cut-set is a disconnecting set, but the converse is not true.
An inclusion-wise minimal disconnecting set of a graph is called a \emph{bond}. It is easy to see that every bond is a cut-set, but there are cut-sets that are not bonds. More precisely, a nonempty set of edges $F$ of $G$ is a bond if and only if $F$ determines a cut $[S,T]$ of $G$ such that $G[S]$ and $G[T]$ are both connected. Let~$s,t\in V(G)$. An $st$-bond of $G$ is a bond whose removal disconnects $s$ and $t$.

In this paper, we are interested in the complexity aspects of the following problem.

\medskip
\noindent    \fbox{
        \parbox{.95\textwidth}{
\noindent
\textsc{Largest Bond}\\
\noindent
\textbf{Instance}: A graph $G=(V,E)$; a positive integer $k$.\\
\noindent
\textbf{Question}: Is there a proper subset $S\subset V(G)$ such that  $G[S]$ and $G[V\setminus S]$ are connected and $|\partial(S)|\geq k$?
}
}
\medskip

We also consider \textsc{Largest $st$-Bond}, where given a graph $G=(V,E)$, vertices $s,t\in V(G)$, and a positive integer $k$, we are asked whether $G$ has an $st$-bond of size at least $k$.

\noindent    \fbox{
        \parbox{.965\textwidth}{
\noindent
{\sc \textsc{Largest $st$-Bond}}

\noindent
\textbf{Instance}: A graph $G=(V,E)$; vertices $s,t\in V(G)$; a positive integer $k$.


\noindent
\textbf{Question}: Is there a proper subset $S\subset V(G)$ with $s\in S$ and $t\notin S$, such that  $G[S]$ and $G[V\setminus S]$ are connected and $|\partial(S)|\geq k$?
}
}

A minimum (maximum) cut of a graph $G$ is a cut with cut-set of minimum (maximum) size. Every minimum cut is a bond, thus a minimum bond is also a minimum cut of $G$, and it can be found in polynomial time using the classical Edmonds–Karp algorithm~\cite{EdmondsKarpAlgorithm}. Besides that, minimum $st$-bonds are well-known structures, since they are precisely the $st$-cuts involved in the Gomory-Hu trees~\cite{GomoryHu}.

Regarding bonds on planar graphs, a folklore theorem states that if $G$ is a connected planar graph, then a set of edges is a cycle in $G$ if and only if it corresponds to a bond in the dual graph of~$G$~\cite{LivroPlanarBond}. Note that each cycle separates the faces of $G$ into the faces in the interior of the cycle and the faces of the exterior of the cycle, and the duals of the cycle edges are exactly the edges that cross from the interior to the exterior~\cite{oxley2006matroid}. Consequently, the girth of a planar graph equals the edge connectivity of its dual~\cite{CHO20072456}.

Although cuts and bonds are similar, computing the largest bond of a graph seems to be harder than computing its maximum cut. {\sc Maximum Cut} is \NP-hard in general~\cite{garey2002computers}, but becomes polynomial for planar graphs~\cite{Hadlock}. On the other hand, finding a longest cycle in a planar graph is \NP-hard, implying that finding a largest bond of a planar multigraph (or of a simple edge-weighted planar graph) is \NP-hard. In addition, it is well-known that if a simple planar graph is $3$-vertex-connected, then its dual is a simple planar graph. In~1976, Garey, Johnson, and Tarjan~\cite{PlanarHamiltonian} proved that the problem of establishing whether a $3$-vertex-connected planar graph is Hamiltonian is \NP-complete, thus, as also noted by Haglin and Venkatesan~\cite{Haglin},  finding the largest bond of a simple planar graph is also \NP-hard, contrasting with the polynomial-time solvability of {\sc Maximum Cut} on planar graphs. 

From the point of view of parameterized complexity, it is well known that {\sc Maximum Cut} can be solved in FPT time when parametrized by the size of the solution~\cite{mahajan1999parameterizing}, and since every graph has a cut with at least half the edges~\cite{erdos1965some}, it follows that it has a linear kernel. Concerning approximation algorithms, a $1/2$-approximation algorithm can be obtained by randomly partitioning the set vertices into two parts, which induces a cut-set whose expected size is at least half of the number of edges~\cite{mitzenmacher_upfal_2005}. The best-known result is the seminal work of Goemans and Williamson~\cite{goemans1995semidef}, who gave a $0.878$-approximation based on semidefinite programming. This has the best approximation factor unless the Unique Games Conjecture fails~\cite{khot2007hard}.
To the best of our knowledge, there is no algorithmic study regarding the parameterized complexity of computing the largest bond of a graph as well as the approximability of the problem.

A closely related problem is the {\sc Connected Max Cut}~\cite{HajiaghayiKMPS15}, which asks for a cut $[S, T]$ of a given a graph $G$ such that $G[S]$ is connected, and that the cut-set $\partial(S)$ has size at least~$k$.
Observe that a bond induces a feasible solution of {\sc Connected Max Cut}, but not the other way around, since $G[T]$ may be disconnected. Indeed, the size of a largest bond can be arbitrarily smaller than the size of the maximum connected cut; take, e.g., a star with $n$ leaves.
For {\sc Connected Max Cut} on general graphs, there exists a $\Omega(1/\log n)$-approximation~\cite{GandhiHKPS18}, where $n$ is the number of vertices. Also, there is a constant-factor approximation with factor~$1/2$ for graphs of bounded treewidth~\cite{ShenLN18}, and a polynomial-time approximation scheme for graphs of bounded genus~\cite{HajiaghayiKMPS15}.

Recently, Saurabh and Zehavi~\cite{SaurabhZ19} considered a generalization of {\sc Connected Max Cut}, named {\sc Multi-Node Hub}. In this problem, given numbers $l$ and $k$, the objective is to find a cut $[S, T]$ of $G$ such that $G[S]$ is connected, $|S| = l$ and $|\partial(S)| \ge k$.
They observed that the problem is $W[1]$-hard when parameterized on~$l$, and gave the first parameterized algorithm for the problem with respect to the parameter~$k$. We remark that the $W[1]$-hardness also holds for {\sc Largest Bond} parameterized by $|S|$.

Since every nonempty bond determines a cut $[S,T]$ such that $G[S]$ and $G[T]$ are both connected, every bond of $G$ has size at most $|E(G)|-|V(G)|+2$. A graph~$G$ has a bond of size $|E(G)|-|V(G)|+2$ if and only if $V(G)$ can be partitioned into two parts such that each part induces a tree. Such graphs are known as \emph{Yutsis graphs}. The set of planar Yutsis graphs is exactly the dual class of  Hamiltonian planar graphs.
According to Aldred, Van Dyck, Brinkmann, Fack, and McKay~\cite{aldred2009graph}, cubic Yutsis graphs appear in the quantum theory of angular momenta as a graphical representation of general recoupling coefficients. They can be manipulated following certain rules in order to generate the so-called summation formulae for the general recoupling coefficient (see \cite{biedenharn1981racah,VANDYCK20071506,yutsis1962mathematical}).

There are very few results about the largest bond size in general graphs. In 2008, Aldred, Van Dyck, Brinkmann, Fack, and McKay~\cite{aldred2009graph} showed that if a Yutsis graph is regular with degree $3$, the partition of the vertex set from the largest bond will result in two sets of equal size. In 2015, Ding, Dziobiak and Wu~\cite{ding2016large} proved that any simple $3$-connected graph $G$ will have a largest bond with size at least $\frac{2}{17}\sqrt{\log n}$, where $n=|V(G)|$. In 2017, Flynn~\cite{flynn2017largest} verified the conjecture that any simple $3$-connected graph $G$ has a largest bond with size at least $\Omega(n^{\log_32})$ for a variety of graph classes including planar graphs.

In this paper, we complement the state of the art on the problem of computing the largest bond of a graph.
Preliminarily, we observe that while {\sc Maximum Cut} is trivial for bipartite graphs, {\sc Largest Bond} remains \NP-hard for such a class of graphs, and we also present a general reduction that allows us to observe that {\sc Largest Bond} is \NP-hard for several classes for which {\sc Maximum Cut} is \NP-hard. Using this framework, we are able to show that {\sc Largest Bond} on graphs of clique-width $w$ cannot be solved in time $f(w)\times n^{o(w)}$ unless the ETH fails.
Moreover, we show that {\sc Largest Bond} does not admit a constant-factor approximation algorithm, unless $\PP = \NP$, and thus is asymptotically harder to approximate than {\sc Maximum Cut}.

As for positive results, the main contributions of this work concern the parameterized complexity of {\sc Largest Bond}. Using win/win approaches, we consider the strategy of preprocessing the input in order to bound the treewidth of the resulting instance. After that, by presenting a dynamic programming algorithm for {\sc Largest Bond} parameterized by the treewidth, we show that the problem is fixed-parameter tractable when parameterized by the size of the solution. Finally, we remark that  {\sc Largest Bond} and  {\sc Largest $st$-Bond} do not admit polynomial kernels, unless NP $\subseteq$ coNP/poly.
%

\section{Intractability results}

In this section, we discuss aspects of the hardness of computing the largest bond.
Notice that {\sc Largest Bond} is  Turing reducible to {\sc Largest $st$-Bond}. Therefore, the results presented in this section also holds for {\sc Largest $st$-Bond}.

Although {\sc Maximum Cut} is trivial for bipartite graphs, we first observe that the same does not apply to compute the largest bond.
%


\begin{theorem}\label{bipartite}
{\sc Largest Bond} is \NP-complete for planar bipartite graphs.
\end{theorem}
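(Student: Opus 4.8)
The plan is to reduce from \textsc{Largest Bond} on simple planar graphs, which the discussion above (via the Garey--Johnson--Tarjan hardness of Hamiltonicity for $3$-vertex-connected planar graphs) shows to be \NP-hard. Membership in \NP\ is immediate, since given a candidate set $S$ one checks in polynomial time that $G[S]$ and $G[V\setminus S]$ are connected and that $|\partial(S)|\ge k$. For the hardness, given a simple planar graph $G$ I would build $G'$ by \emph{subdividing every edge exactly once}, replacing each $uv\in E(G)$ by a path through a fresh degree-$2$ vertex $w_{uv}$. The graph $G'$ is planar (subdivision is a homeomorphism) and bipartite with parts $V(G)$ and $\{w_{uv}:uv\in E(G)\}$, because every edge of $G'$ joins an original vertex to a subdivision vertex; it is simple whenever $G$ is. The instance $(G,k)$ maps to $(G',k)$, and the whole reduction rests on the claim that $G$ and $G'$ have the same largest bond size, which I write $\lambda(G)=\lambda(G')$.

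To prove $\lambda(G')\ge\lambda(G)$ I would lift a bond $[S,T]$ of $G$: keep each original vertex on its side, send $w_{uv}$ to $S'$ if $u,v\in S$ and to $T'$ if $u,v\in T$, and, for each edge $uv$ crossing $[S,T]$, place $w_{uv}$ on either side. Each non-crossing edge then contributes $0$ crossing edges in $G'$ and each crossing edge contributes exactly $1$, so $|\partial_{G'}(S')|=|\partial_G(S)|$; connectivity of $G'[S']$ follows because every original edge inside $S$ becomes an internal path inside $S'$, and the crossing subdivision vertices are pendant and cannot disconnect their side. For the reverse inequality I would project a bond $[S',T']$ of $G'$ by setting $S=S'\cap V(G)$ and $T=T'\cap V(G)$. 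Since $G'$ is bipartite with the stated parts, any path in $G'[S']$ between two original vertices alternates original and subdivision vertices, and reading off the original vertices yields a walk in $G[S]$; hence $G[S]$, and symmetrically $G[T]$, is connected. For the cut sizes, observe that in any bond of $G'$ with $|T'|\ge 2$ a subdivision vertex whose two neighbours both lie in $S'$ cannot itself lie in $T'$, as it would be isolated in $G'[T']$; thus every non-crossing original edge again yields $0$ crossings and every crossing edge yields exactly $1$, giving $|\partial_G(S)|=|\partial_{G'}(S')|$.

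The only bonds of $G'$ escaping this accounting are the degenerate ones in which all original vertices lie on one side, forcing the other side to be a single subdivision vertex and the bond to have size $2$; these never attain the maximum once $\lambda(G)\ge 3$, which holds for the $3$-vertex-connected source instances (the cut $[\{v\},V\setminus\{v\}]$ is already a bond of size $\deg(v)\ge 3$), so we may assume $k\ge 3$. Combining the two inequalities gives $\lambda(G')=\lambda(G)$, hence $G$ has a bond of size at least $k$ if and only if $G'$ does, completing the reduction. I expect the main obstacle to be this equality of largest-bond sizes rather than the construction itself: the delicate points are the alternating-path argument transporting connectivity across the subdivision and the verification that, in a genuine (size $>2$) bond of $G'$, the subdivision vertex of every non-crossing edge is forced onto the same side as its endpoints, so that no spurious crossings inflate $|\partial_{G'}(S')|$.
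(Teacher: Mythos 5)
Your proof is correct, but it reaches planar bipartite graphs by a genuinely different route than the paper. The paper starts from Picouleau's result that Hamiltonicity of $4$-regular planar graphs is \NP-complete and uses the fact that the dual of an Eulerian planar graph is bipartite, so the dual multigraph is already planar and bipartite; subdividing each edge then only serves to make it simple. You instead start from the \NP-hardness of \textsc{Largest Bond} on simple planar graphs (via Garey--Johnson--Tarjan, as recalled in the introduction) and obtain bipartiteness for free from the subdivision itself, since subdividing every edge exactly once makes any graph bipartite with parts $V(G)$ and the subdivision vertices. This lets you skip Picouleau's theorem and the Eulerian/bipartite-dual characterization entirely, at the cost of having to verify that full subdivision preserves the largest bond size on a graph that is not already bipartite --- which you do: the lifting/projection arguments, the observation that the subdivision vertex of a non-crossing edge is forced onto the side of its endpoints (else it is isolated), and the exclusion of the degenerate size-$2$ bonds consisting of a single subdivision vertex (harmless since one may assume $k \ge 3$) are precisely the points that the paper's one-line assertion ``this operation preserves the size of the largest bond'' glosses over. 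The paper's route yields the slightly stronger intermediate statement that the problem is already hard on planar bipartite \emph{multigraphs}; yours is more elementary and makes the subdivision step fully rigorous.
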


\begin{proof}
It is well-known that a connected planar graph is Eulerian if and only if its dual graph is bipartite~\cite{WELSH1969375}. In 1994, Picouleau~\cite{PICOULEAU1994463} proved that deciding whether a $4$-regular planar graph has a Hamiltonian cycle is \NP-complete. Thus, to determine the size of the largest bond of a planar bipartite multigraph is \NP-complete. In order to obtain a simple planar bipartite graph, it is enough to subdivide each edge of the graph; notice that this operation preserves the size of the largest bond of the graph. Therefore, determining the size of the largest bond of a simple planar bipartite graph is \NP-complete.
\end{proof}

\begin{theorem}\label{w1hard}
Let $G$ be a simple bipartite graph and $\ell\in \mathbb{N}$. To determine the largest bond $\partial(S)$ of $G$ with $|S|=\ell$ is $W[1]$-hard with respect to $\ell$.
\end{theorem}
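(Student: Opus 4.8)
The plan is to give a parameterized reduction from \textsc{Clique}, which is $W[1]$-hard parameterized by the solution size and remains so on regular graphs. The guiding observation is that a bond is a cut, and the size of a cut equals $\sum_{v\in S}\deg(v)-2\,e(G[S])$, where $e(G[S])$ is the number of edges inside $S$; so, in a regular graph, maximizing a bond of prescribed cardinality $|S|=\ell$ amounts to \emph{minimizing} the number of internal edges over sets $S$ with $G[S]$ and $G[V\setminus S]$ both connected. Because cuts penalize density, a clique is a bad configuration and it is more convenient to pass to the complement: a clique in $H$ is an independent set in $\overline{H}$, which is regular whenever $H$ is, so I reduce instead from \textsc{Independent Set} on a $d'$-regular graph $H'$, where an edgeless (hence sparse) $S$ is exactly what maximizes the boundary. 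The difficulty is that an independent set is disconnected, so I must embed it in a gadget that makes the intended set connected, keeps the host bipartite, and—crucially—uses the bond's \emph{second} connectivity requirement to detect non-independence.

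Concretely, I would take $G$ to be the incidence graph of $H'$: a vertex $x_v$ for each $v\in V(H')$ on one side, and a subdivision vertex $w_e$ for each edge $e=uv$ adjacent to $x_u$ and $x_v$ on the other side; subdividing every edge once makes $G$ bipartite automatically. To allow a connected selection of an otherwise edgeless set, I add a root $r$ adjacent to all $x_v$ and set $\ell=k+1$, with the intended solution $S=\{r\}\cup\{x_v : v\in I\}$ for an independent set $I$ of size $k$ (which induces a star, hence is connected). The key leverage is that the bond condition forces the edge-nodes: if both endpoints $x_u,x_v$ of $w_{uv}$ lie in $S$, then $w_{uv}$ has no neighbour left in $V\setminus S$ and would be isolated there, so it is pushed into $S$. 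Under the tight budget $|S|=\ell=k+1$, spending the whole budget on $k$ vertex-nodes is thus feasible \emph{only} when those vertices are independent; any non-independent choice must divert budget to forced edge-nodes and so selects fewer $x_v$'s. Writing $a$ for the number of selected vertex-nodes and $I_{AF}$ for the number of incidences between selected vertex- and edge-nodes, a direct computation gives $|\partial(S)|=n+2k+a(d'-3)-2I_{AF}$, and $G[S]$-connectivity forces $I_{AF}\ge f=k-a$. The independent-set configuration ($a=k$, no edge-nodes) attains the target $n+k(d'-1)$, whereas every feasible configuration with $a\le k-1$ is at most $n+a(d'-1)$ and hence falls short by at least $d'-1$; taking $d'\ge 2$ makes this gap positive and yields the equivalence, with parameter $\ell=k+1$ bounded by a function of $k$.

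The main obstacle I anticipate is making the second connectivity requirement do its job \emph{cleanly}. I need $G[V\setminus S]$ to be connected in the intended independent-set solution, yet I must not inadvertently rescue an edge-node whose two vertex-neighbours have both been selected: a single universal vertex adjacent to every edge-node would keep such a $w_{uv}$ connected in the complement, the forcing would vanish, and all $k$-sets would become indistinguishable. The delicate part is therefore to route complement-connectivity through the unselected vertex-nodes (so that edge-nodes with both neighbours selected really are stranded), to ensure this holds for the intended solution, and to pin the auxiliary root(s) to their intended sides, for instance with pendant bundles. Once this gadget is in place, verifying bipartiteness, the validity of the parameter bound, and the two directions of correctness are routine.
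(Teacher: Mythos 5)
Your overall strategy---reduce from \textsc{Independent Set} on regular graphs, use degree-$2$ ``edge'' vertices that become stranded in $V\setminus S$ whenever both of their neighbours are selected, and let the exact cardinality budget $|S|=\ell$ convert that forcing into a penalty for non-independent selections---is exactly the right idea, and it is the same mechanism the paper uses. But the construction as you have concretely specified it (incidence graph of $H'$ plus one root $r$ adjacent to all vertex-nodes, $S=\{r\}\cup\{x_v: v\in I\}$) does not establish the forward direction: $G[V\setminus S]$ need not be connected even when $I$ is a genuine independent set. Take $H'=C_6$ and $I=\{v_1,v_3,v_5\}$; then $V\setminus S$ splits into the three components $\{w_{61},x_2\hbox{-side}\}$-type pieces $\{w_{12},x_2,w_{23}\}$, $\{w_{34},x_4,w_{45}\}$, $\{w_{56},x_6,w_{61}\}$, so $\partial(S)$ is not a bond at all. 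You correctly identify this as the ``delicate part,'' but the fix is not routine bookkeeping: any gluing vertex must avoid being adjacent to the edge-nodes (or the forcing collapses, as you note), must preserve bipartiteness, and must be provably excluded from $S$ in the converse direction, which reopens the case analysis ($r\notin S$, gluing vertex in $S$, etc.) that your counting formula silently assumes away. As written, the proof is a plan with its load-bearing gadget missing.

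The paper sidesteps both connectivity obligations with one move: starting from a $d$-regular instance $H$ of \textsc{$k$-Independent Set}, it first adds an edge between \emph{every} pair of vertices (forming a multigraph) and only then subdivides every edge. For an independent set $K$ of size $k$, the intended $S$ is $K$ together with the $\binom{k}{2}$ subdivision vertices of the added edges inside $K$, so $G[S]$ is a subdivided clique (connected, no root needed), and $G[V\setminus S]$ contains the subdivided clique on the other $n-k$ original vertices off which everything else hangs (connected for free, with no danger of rescuing stranded subdivision vertices, since a subdivision vertex with both neighbours in $K$ is already placed in $S$). The budget becomes $\ell=k+\binom{k}{2}$ and the target cut size $dk+k(n-k)$; a non-independent $K$ would force extra subdivision vertices (the parallel edge of each $H$-edge inside $K$) into $S$, violating the budget exactly as in your argument. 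If you want to salvage your leaner $\ell=k+1$ construction, you need to actually build and verify the complement-side gadget; otherwise, completing the graph before subdividing is the cleaner route.
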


\begin{proof}
From an instance $H$ of {\sc $k$-Independent Set} on regular graphs we first construct a multigraph $G'$ by adding an edge between any pair of vertices.
Finally, we obtain a simple graph $G$ by subdividing every edge of $G'$.
Notice that $H$ has an independent set of size $k$ if and only if $G$ has a bond $\partial(S)$ of size $dk+k(n-k)$ with $|S|=k + {\genfrac(){0pt}{2}{k}{2}}$, where $d$ is the vertex degree of $H$.
\end{proof}

Next, we present a general framework for reducibility from {\sc Maximum Cut} to {\sc Largest Bond}, by defining a special graph operator $\psi$ such that {\sc Maximum Cut} on a graph class~$\mathcal{F}$ is reducible to {\sc Largest Bond} on the image of $\mathcal{F}$ via~$\psi$. An interesting particular case occurs when $\mathcal{F}$ is closed under $\psi$ (for instance, chordal graphs are closed under $\psi$).

\begin{definition}
Let $G$ be a graph and let $n=V(G)$. The graph $\psi(G)$ is constructed as follows: 
\begin{enumerate}
\item[(i)] create $n$ disjoint copies $G_1, G_2, \ldots, G_n$ of $G$;
\item[(ii)] add vertices $v_a$ and $v_b$;
\item[(iii)] add an edge between $v_a$ and $v_b$;
\item[(iv)] add all  possible edges between $V(G_1 \cup G_2 \cup \ldots \cup G_n)$ and~$\{v_a,v_b\}$.
\end{enumerate}
\end{definition}

\begin{definition}
A set of graphs~$\mathcal{G}$ is closed under operator $\psi$ if whenever $G \in \mathcal{G}$, then $\psi(G)\in \mathcal{G}$.
\end{definition}


\begin{theorem}\label{framework}
{\sc Largest Bond} is \NP-complete for any graph class $\mathcal{G}$ such that:
\begin{itemize}
\item $\mathcal{G}$ is closed under operator $\psi$; and 
\item {\sc MaxCut} is \NP-complete for graphs in $\mathcal{G}$.
\end{itemize}
\end{theorem}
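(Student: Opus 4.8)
The plan is to give a polynomial-time many-one reduction from \textsc{MaxCut} restricted to $\mathcal{G}$ to \textsc{Largest Bond} restricted to $\mathcal{G}$. Given an instance $(G,c)$ of \textsc{MaxCut} with $n=|V(G)|$, I would output the instance $(\psi(G),k)$ with $k=1+n^2+nc$. Since $\mathcal{G}$ is closed under $\psi$, the graph $\psi(G)$ again lies in $\mathcal{G}$, and it has $n^2+2$ vertices, so the map is computable in polynomial time. Membership of \textsc{Largest Bond} in \NP{} is immediate (guess $S$, check that $\psi(G)[S]$ and $\psi(G)[V\setminus S]$ are connected, and count the crossing edges), so everything reduces to proving the equivalence. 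I may assume $n\ge 3$, since \textsc{MaxCut} on graphs with at most two vertices is decidable directly.

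The structural fact I would exploit is that $v_a$ and $v_b$ are each adjacent to every copy-vertex. Hence whenever a partition $[S,T]$ separates the two hubs, both sides are automatically connected, because every copy-vertex is joined to the hub on its own side; thus \emph{every} such partition is a valid bond, and I am free to choose the partition inside each copy independently. For these partitions the cut-set counts exactly: the edge $v_av_b$ crosses ($+1$); each of the $n^2$ copy-vertices contributes precisely one crossing hub-edge regardless of its side ($+n^2$); and within each copy $G_i$ the crossing internal edges form a cut of $G_i$ under the induced bipartition. As there are no edges between distinct copies, the $n$ copies decouple, so the largest bond separating the hubs has size $1+n^2+n\cdot\mathrm{mc}(G)$, where $\mathrm{mc}(G)$ denotes the maximum cut value of $G$. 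This already yields the forward direction: copying a cut of $G$ of size at least $c$ into all $n$ copies, with $v_a\in S$ and $v_b\in T$, produces a bond of size at least $1+n^2+nc=k$.

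For the converse I must rule out that a larger bond comes from a partition placing $v_a$ and $v_b$ on the same side, and this is the main obstacle — indeed it is exactly why the construction uses $n$ copies. If both hubs lie in $S$, then $T$ consists only of copy-vertices, and since distinct copies are pairwise non-adjacent, connectivity of $\psi(G)[T]$ forces $T$ to sit inside a single copy $G_i$. The cut-set then has size $2|T|+\mathrm{cut}_{G_i}(T)\le 2n+|E(G)|\le 2n+\binom{n}{2}$, which is strictly less than $k\ge 1+n^2$ for every $n\ge 3$. Consequently any bond of size at least $k$ must separate the hubs, so it has the form $1+n^2+\sum_{i=1}^n \mathrm{cut}_{G_i}$ with $\sum_i \mathrm{cut}_{G_i}\ge nc$; by averaging, some copy is cut by at least $c$ edges, furnishing a cut of $G$ of size at least $c$. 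Combining the two directions gives the equivalence, and together with membership in \NP{} this establishes that \textsc{Largest Bond} is \NP-complete on $\mathcal{G}$.
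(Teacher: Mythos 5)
Your proposal is correct and follows essentially the same route as the paper's proof: reduce \textsc{MaxCut} on $G$ to \textsc{Largest Bond} on $\psi(G)$ with threshold $nc+n^2+1$, observe that any partition separating $v_a$ from $v_b$ is automatically a bond of size $1+n^2+\sum_i\mathrm{cut}_{G_i}$, and rule out bonds keeping the hubs together because the other side must then lie inside a single copy. Your version is slightly more explicit than the paper's in the hubs-together case, where you give the quantitative bound $2n+\binom{n}{2}<1+n^2$ rather than just asserting the contradiction, but the underlying argument is identical.
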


\begin{proof}
Let $G\in \mathcal{G}$, $n = |V(G)|$, and $H=\psi(G)$. By~(i), $H\in\mathcal{G}$. Suppose $G$ has a cut $[S,V(G)\setminus S]$ of size $k$, and let $S_1$, $S_2$, \ldots, $S_n$ be the copies of $S$ in $G_1, G_2, \ldots, G_n$, respectively. If $S'=\{v_a\}\cup S_1\cup S_2 \cup \ldots \cup S_n$, then $[S',V(H)\setminus S']$ defines a bond $\partial(S')$ of~$H$ of size at least $nk+n^2+1$.
Conversely, suppose $H$ has a bond $\partial(S')$ of size at least $nk+n^2+1$. We consider the following cases: $(a)$~If $\{v_a, v_b\}\subseteq S'$, then for all copies $G_i$ but one we have $V(G_i)\subseteq S'$, as otherwise the graph induced by $V(H)\setminus S'$ would not be connected, and~$\partial(S')$ would not be a bond. Thus, $V(H)\setminus S' \subseteq V(G_j)$ for some $j$, then the size of $\partial(S')$ is smaller than $nk+n^2+1$, a contradiction. $(b)$~If $v_a\in S'$ and $v_b\notin S'$,
then  $\{v_a,v_b\}$ is incident with exactly $n^2+1$ edges crossing $[S',V(H)\setminus S']$, which implies that at least one copy $G_i$ has $k$ or more edges crossing $[S',V(H)\setminus S']$. Therefore, $G$ has a cut of size at least $k$.
\end{proof}

\begin{corollary}\label{subclasses_NPc}
{\sc Largest Bond} is \NP-complete for the following classes:
\begin{enumerate}
  \item chordal graphs;
  \item co-comparability graphs;
  \item $P_5$-free graphs.
\end{enumerate}
\end{corollary}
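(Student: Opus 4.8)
The plan is to invoke \Cref{framework} and, for each of the three graph classes, verify the two hypotheses: that the class is closed under the operator~$\psi$, and that {\sc Maximum Cut} is \NP-complete on that class. The \NP-completeness of {\sc Maximum Cut} for all three classes is known from the literature, so the real work is checking closure under~$\psi$. Recall that $\psi(G)$ takes $n$ disjoint copies $G_1,\dots,G_n$ of $G$, adds two vertices $v_a,v_b$ joined by an edge, and joins both $v_a$ and $v_b$ to \emph{every} vertex of every copy; in other words, $\{v_a,v_b\}$ forms a pair of universal (dominating) vertices relative to the copies, and $v_a,v_b$ are themselves adjacent. So closure under~$\psi$ amounts to asking whether taking a disjoint union of copies of a graph in the class and then attaching two adjacent universal vertices keeps us inside the class.

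For chordal graphs, I would argue as follows. A disjoint union of chordal graphs is chordal, since any cycle of length at least four lives inside a single connected component and therefore has a chord. Adding the two universal vertices $v_a,v_b$ (with the edge $v_a v_b$) cannot create a chordless cycle of length $\ge 4$: any such cycle using $v_a$ or $v_b$ is immediately shortcut by the universal adjacencies, and any cycle avoiding both is confined to one copy and already has a chord. Hence $\psi(G)$ is chordal whenever $G$ is, and since {\sc Maximum Cut} is \NP-complete on chordal graphs, \Cref{framework} applies.

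For the co-comparability and $P_5$-free cases the strategy is the same, but the structural verification differs. For $P_5$-free graphs the key observation is that the universal vertices prevent any induced path on five vertices from forming: no induced $P_5$ can pass through $v_a$ or $v_b$ (they are adjacent to every other vertex, so they could only be endpoints, but then they would have two nonadjacent neighbours on the path, which they do not, being universal), and the copies are mutually nonadjacent, so any induced $P_5$ would have to lie within a single copy $G_i$, which is $P_5$-free by hypothesis; thus $\psi(G)$ is $P_5$-free. For co-comparability graphs I would exhibit a co-comparability ordering (equivalently, show the complement is a comparability graph): the complement of $\psi(G)$ consists of the disjoint complements $\overline{G_i}$ together with the two isolated vertices $v_a,v_b$ (isolated because $v_a,v_b$ are universal and adjacent in $\psi(G)$), and a disjoint union of comparability graphs with added isolated vertices is again a comparability graph, so $\psi(G)$ is co-comparability whenever $G$ is.

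The main obstacle is the co-comparability verification, since it requires a genuine structural argument about transitive orientations of the complement rather than the purely local forbidden-subgraph checks that settle the chordal and $P_5$-free cases; one must confirm that taking the disjoint union of comparability complements and adjoining isolated vertices preserves transitive orientability, which is routine but is the one step where care is needed. Once all three closure properties and the three known \NP-completeness results for {\sc Maximum Cut} are in hand, the corollary follows immediately from \Cref{framework}.
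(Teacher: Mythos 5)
Your overall strategy is exactly the paper's: apply Theorem~\ref{framework}, inherit the \NP-completeness of {\sc Maximum Cut} from split graphs (for chordal) and co-bipartite graphs (for $P_5$-free and co-comparability), and verify closure under $\psi$ class by class. Your chordal and $P_5$-free closure arguments coincide with the paper's (universal vertices cannot lie on a chordless $C_{\ge 4}$ or an induced $P_5$, and both properties are preserved under disjoint union), and those parts are fine.

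However, your co-comparability verification contains a genuine error: you assert that the complement of $\psi(G)$ ``consists of the disjoint complements $\overline{G_i}$ together with the two isolated vertices $v_a,v_b$.'' This mis-computes the complement. Since the copies $G_1,\dots,G_n$ are pairwise non-adjacent in $\psi(G)$, \emph{all} edges between distinct copies are present in $\overline{\psi(G)}$; that is, $\overline{\psi(G)}$ is the \emph{join} of $\overline{G_1},\dots,\overline{G_n}$ (plus the two isolated vertices), not their disjoint union. Consequently the fact you invoke --- that disjoint unions of comparability graphs plus isolated vertices are comparability graphs --- does not apply to the graph at hand. The conclusion is still true and the gap is repairable: the join of comparability graphs is a comparability graph (take the ordinal sum of transitive orientations, directing every cross edge from $\overline{G_i}$ to $\overline{G_j}$ for $i<j$; transitivity is easily checked), and adding isolated vertices is harmless. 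The paper avoids this issue entirely by working in $\psi(G)$ itself, using the characterization of co-comparability graphs as intersection graphs of curves between two parallel lines: the disjoint union of such representations is again one, and $v_a,v_b$ are realized by two extra curves (lines) crossing all others. Either fix works, but as written your argument for case (2) does not go through.
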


\begin{proof}
Bodlaender and Jansen~\cite{bodlaender2000complexity} proved that {\sc Maximum Cut} is \NP-complete when restricted to split and co-bipartite graphs. Since split graphs are chordal and co-bipartite graphs are $P_5$-free and co-comparability graphs, the \NP-completeness also holds for these classes.

Now we have to show that the classes are closed under $\psi$.

\emph{(1.)} A graph is chordal if every cycle of length at least $4$ has a chord. Let $G$ be a chordal graph. Notice that the disjoint union of $G_1, G_2, \ldots, G_n$ is also chordal. In addition, no chordless cycle of length at least $4$ may contain either $v_a$ or $v_b$ because both vertices are universal. Therefore,~$\psi(G)$ is chordal.

\emph{(2.)} A graph is a co-comparability if it is the intersection graph of curves from a line to a parallel line. Let $G$ be a co-comparability graph. Notice that the class of co-comparability graphs is closed under disjoint union. Thus, in order to conclude that $\psi(G)$ is co-comparability, it is enough to observe that from a representation of curves (from a line to a parallel line) of the disjoint union of $G_1, G_2, \ldots, G_n$, one can construct a representation of $\psi(G)$ by adding two concurrent lines (representing $v_a$ and $v_b$) crossing all curves.

\emph{(3.)} The disjoint union of $P_5$-free graphs is also $P_5$-free. In addition, no induced $P_5$ contains either $v_a$ or $v_b$ because both vertices are universal. Then, the class of $P_5$-free graphs is closed under~$\psi$.
\end{proof}

\subsection{Algorithmic lower bound for clique-width parameterization}

The {\em clique-width} of a graph~$G$, denoted by~$cw(G)$, is defined as the minimum number of labels needed to construct~$G$, using the following four operations:

\begin{enumerate}
  \item Create a single vertex~$v$ with an integer label~$\ell$ (denoted by~$\ell(v)$);

  \item Take the disjoint union (i.e., co-join) of two graphs (denoted by~$\oplus$);

  \item Join by an edge every vertex labeled~$i$ to every vertex labeled~$j$ for~$i \neq j$ (denoted by~$\eta(i,j)$);

  \item Relabel all vertices with label~$i$ by label~$j$ (denoted by~$\rho(i,j)$).
\end{enumerate}

An algebraic term that represents such a construction of $G$ and uses at most~$w$ labels is said to be a {\em $w$-expression} of $G$, and the clique-width of $G$ is the minimum~$w$ for which $G$ has a $w$-expression.

Graph classes with bounded clique-width include cographs, distance-hereditary graphs, graphs of bounded treewidth, graphs of bounded branchwidth, and graphs of bounded rank-width.

In the '90s,  Courcelle, Makowsky, and Rotics~\cite{courcelle2000linear} proved that all problems expressible in MS1-logic are fixed-parameter tractable when parameterized by the clique-width of a graph and the logical expression size.
The applicability of this meta-theorem has made clique-width become one of the most studied parameters in parameterized complexity. However, although several problems are MS1-expressible, this is not the case with {\sc Maximum Cut}.

In 2014, Fomin, Golovach, Lokshtanov and Saurabh~\cite{fomin2014almost} showed that {\sc Maximum Cut} on a graph of clique-width $w$ cannot be solved in time $f(w)\times n^{o(w)}$ for any function $f$ of $w$ unless Exponential Time Hypothesis (ETH) fails. Using operator $\psi$, we are able to extend this result to {\sc Largest Bond}.

\begin{lemma}\label{lb_cw}
{\sc Largest Bond} on graphs of clique-width $w$ cannot be solved in time $f(w)\times n^{o(w)}$ unless the ETH fails.
\end{lemma}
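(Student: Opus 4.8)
The plan is to transfer the clique-width lower bound for {\sc Maximum Cut} of Fomin, Golovach, Lokshtanov and Saurabh~\cite{fomin2014almost} through the reduction of Theorem~\ref{framework}, which maps an instance $G$ of {\sc Maximum Cut} to the instance $\psi(G)$ of {\sc Largest Bond}. For this to work, two things must hold: the reduction must blow up the number of vertices only polynomially (it does, since $|V(\psi(G))| = n^2 + 2$), and it must increase the clique-width by at most an additive or multiplicative constant. The core of the argument is therefore to show that $cw(\psi(G)) \le cw(G) + O(1)$; once this is established, the lower bound follows by a standard contradiction.

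First I would bound the clique-width of $\psi(G)$. Let $w = cw(G)$ and fix a $w$-expression $t$ of $G$. For each copy $G_i$ I take a renamed instance of $t$ and append relabel operations $\rho(\cdot,1)$ that collapse all of its labels to the single label~$1$; call the resulting term $t_i$, which builds $G_i$ with every vertex carrying label~$1$ and uses at most $w$ labels. Taking the disjoint union $t_1 \oplus t_2 \oplus \dots \oplus t_n$ then produces the disjoint union of the copies, still with every vertex labeled~$1$ and without introducing any spurious edges (crucially, because each $t_i$ is a complete self-contained term, the join operations internal to one copy never touch another copy). Finally I introduce $v_a$ with a fresh label~$2$ and $v_b$ with a fresh label~$3$, and apply $\eta(1,2)$, $\eta(1,3)$ and $\eta(2,3)$ to add all edges from the copies to $v_a$ and $v_b$ and the single edge $v_av_b$. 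This realizes $\psi(G)$ exactly, using $\max(w,3)$ labels, so $cw(\psi(G)) \le \max(cw(G),3)$.

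With the clique-width bound in hand, the lower bound is immediate. Suppose, for contradiction, that {\sc Largest Bond} on graphs of clique-width $w'$ could be solved in time $f(w') \cdot N^{g(w')}$ for some $g(w') = o(w')$, where $N$ is the number of vertices. Given a {\sc Maximum Cut} instance $(G,k)$ with $cw(G)=w$ and $n$ vertices, I would build $H=\psi(G)$ in polynomial time; by Theorem~\ref{framework}, $G$ has a cut of size at least $k$ if and only if $H$ has a bond of size at least $nk+n^2+1$, and $H$ has $N = n^2+2$ vertices and $cw(H) \le \max(w,3)$. Deciding the latter by the assumed algorithm takes time $f(\max(w,3)) \cdot (n^2+2)^{g(\max(w,3))}$. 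Since $(n^2+2)^{g(\cdot)} \le n^{O(1)} \cdot n^{2g(\cdot)}$ and, for all $w \ge 3$, $\max(w,3)=w$ and hence $2g(\max(w,3)) = 2g(w) = o(w)$, this running time is of the form $f'(w)\cdot n^{o(w)}$ for a suitable function $f'$. That would solve {\sc Maximum Cut} on clique-width-$w$ graphs in time $f'(w)\cdot n^{o(w)}$, contradicting~\cite{fomin2014almost} under ETH.

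The only step requiring care is the clique-width construction, and specifically the observation that collapsing each copy to a single label before the disjoint union lets the three final join operations wire up all $n$ copies to $v_a$ and $v_b$ uniformly; the polynomial vertex blow-up is harmless because the $o(w)$ in the exponent absorbs the factor of two coming from $N \approx n^2$.
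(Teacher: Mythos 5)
Your proposal is correct and follows essentially the same route as the paper: transfer the Fomin--Golovach--Lokshtanov--Saurabh lower bound for {\sc Maximum Cut} through the $\psi$-reduction of Theorem~\ref{framework}, after checking that $\psi(G)$ has clique-width bounded by (essentially) that of $G$ --- the paper even uses one fewer label by giving $v_a$ and $v_b$ the same label and building the $K_2$ before the union, but this is immaterial. Your explicit handling of the quadratic vertex blow-up being absorbed into the $n^{o(w)}$ exponent is a detail the paper leaves implicit, but the argument is the same.
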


\begin{proof}
{\sc Maximum Cut} cannot be solved in time $f(w)\times n^{o(w)}$ on graphs of clique-width $w$, unless Exponential Time Hypothesis (ETH) fails~\cite{fomin2014almost}. Therefore, by the polynomial-time reduction presented in Theorem~\ref{framework}, it is enough to show that the clique-width of $\psi(G)$ is upper bounded by a linear function of the clique-width of $G$.

If $G$ has clique-width $w$, then the disjoint union $H_1 = G_1\oplus G_2 \oplus \ldots \oplus G_n$ has clique-width~$w$.
Suppose that all vertices in $H_1$ have label $1$.
Now, let $H_2$ be the graph isomorphic to a $K_2$ such that $V(H)=\{v_a,v_b\}$, and $v_a,v_b$ are labeled with $2$.
In order to construct~$\psi(G)$ from $H_1\oplus H_2$ it is enough to apply the join $\eta(1,2)$. Thus, $\psi(G)$ has clique-width equals~$w$.
\end{proof}

\subsection{Inapproximability}

While the maximum cut of a graph has at least a constant fraction of the edges,
the size of the largest bond can be arbitrarily smaller than the number of
edges; take, e.g., a cycle on $n$ edges, for which a largest bond has size $2$.
This discrepancy is also reflected on the approximability of the problems.
Indeed, we show that {\sc Largest Bond} is strictly harder to approximate than
{\sc Maximum Cut}.
To simplify the presentation, we consider a weighted version of the problem in
which edges are allowed to have weights $0$ or $1$; the hardness results will
follow for the unweighted case as well. In the {\sc Binary Weighted Largest Bond}, the
input is given by a connected weighted graph $H$ with weights $w : E(H) \to
\{0,1\}$. The objective is to find a bond whose total weight is maximum.

Let $G$ be a graph on $n$ vertices and whose maximum cut has size $k$. Next, we
define the $G$-\emph{edge embedding} operator~$\xig$. Given a connected weighted
graph $H$, the weighted graph~$\xig(H)$ is constructed by replacing each edge
$\{u,v\} \in E(H)$ with weight~$1$ by a copy of~$G$, denoted by $G_{uv}$, whose
edges have weight $1$, and, for each vertex~$t$ of $G_{uv}$, new edges~$\{u,t\}$
and~$\{v,t\}$, both with weight $0$.

We can also apply the $G$-edge embedding operation on the graph $\xig(H)$, then
on $\xig(\xig(H))$, and so on. In what follows, for an integer $h \ge 0$, denote
by $\xig^h(H)$ the graph resulting from the operation that receives a graph $H$
and applies $\xig$ successively $h$ times. Notice that $\xig^h(H)$ can be constructed in $\Oh(|V(G)|^{h+1})$ time.
For some $j$, ${0 \le j \le h-1}$, observe that an edge ${\{u,v\} \in
E(\xig^j(H))}$ will be replaced by a series of vertices added in iterations $j +
1, j+ 2, \dots, h$. These vertices will be called the \emph{descendants}
of~$\{u,v\}$, and will be denoted by~$V_{uv}$.

Let $K_2$ be the graph composed of a single edge $\{u,v\}$, and consider the
problem of finding a bond of $\xig(K_2)$ with maximum weight. Since edges
connecting $u$ or $v$ have weight $0$, one can assume that $u$ and $v$ are in
different sides of the bond, and the problem reduces to finding a  maximum cut
of~$G$.
In other words, the operator $\xig$ embeds an instance $G$ of {\sc Maximum Cut}
into an edge~$\{u,v\}$ of~$K_2$.

This suggests the following strategy to solve an instance of {\sc Maximum Cut}.
For some constant integer $h \ge 1$, calculate $H = \xig^h(K_2)$, and obtain a
bond~$F$~of~$H$ with maximum weight.
Note that, to solve $H$, one must solve embedded instances of {\sc Maximum Cut}
in multiple levels simultaneously. For a level~$j$, ${1 \le j \le h - 1}$, each
edge $\{u,v\} \in E(\xig^j(K_2))$ with weight~$1$ will be replaced by a graph
$G_{uv}$ which is isomorphic to $G$.
In Lemma~\ref{lem-edge-cut} below, we argue that $F$ is such that either
$V(G_{uv}) \cup \{u,v\}$ are all in the same side of the cut, or~$u$~and~$v$ are in
distinct sides. In the latter case, the edges of $F$ that separate $u$ and $v$
will induce a cut of $G$.

In the remaining of this section, we consider a constant integer $h \ge 0$.
Then, we define $H^j = \xig^j(K_2)$ for every $j$, ${0 \le j \le h}$, and $H =
H^h$.
Also, we write $[S,T]$ to denote the cut induced by a bond~$F$~of~$H$.

\begin{definition}
Let $F$ be a bond~of~$H$ with cut $[S,T]$.
We say that an edge ${\{u,v\} \in E(H^j)}$ with weight~$1$ is \emph{nice for~$F$}
if either
\begin{itemize}
  \item $|\{u, v\} \cap S| = 1$, or
  \item $(\{u, v\} \cup V_{uv}) \subseteq S$, or
  \item $(\{u, v\} \cup V_{uv}) \subseteq T$.
\end{itemize}
Also, we say that $F$ is \emph{nice} if, for every $j$, ${0 \le j \le
h-1}$, and every edge $\{u,v\} \in E(H^j)$ with weight~$1$, $\{u,v\}$ is nice
for~$F$.
\end{definition}

\begin{lemma}\label{lem-edge-cut}
There is a polynomial-time algorithm that receives a bond $F$, and finds a nice
bond ${F'}$ such that $w(F') = w(F)$.
\end{lemma}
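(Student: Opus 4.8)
The plan is to repair a given bond $F$ one weight-$1$ edge at a time, never changing the total weight, until every weight-$1$ edge is nice. Two structural facts drive the argument. First, for any weight-$1$ edge $\{u,v\}\in E(H^j)$ with $j<h$, the descendant set $V_{uv}$ is attached to the rest of $H$ only through $u$ and $v$: the construction adds edges only inside the gadget $\{u,v\}\cup V_{uv}$ and from $V_{uv}$ to the two anchors $u,v$, so every edge leaving $\{u,v\}\cup V_{uv}$ is incident to $u$ or $v$. Consequently, if $u$ and $v$ lie on the same side of $[S,T]$ and $V_{uv}$ meets both sides, then the \emph{entire} opposite side is contained in $V_{uv}$: a vertex of that side lying outside $V_{uv}$ could only reach the part of the same side lying inside $V_{uv}$ by passing through $u$ or $v$, contradicting that $F$ is a bond. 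Second, every vertex created at a level $j<h$ is incident in $H$ only to edges of weight $0$ (the only weight-$1$ edges of $H$ are the internal edges of the deepest, level-$h$ copies of $G$, whose endpoints are level-$h$ vertices). In particular such $u,v$ touch only weight-$0$ edges, so moving either of them across the cut leaves $w(F)$ unchanged.

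First I would locate a non-nice edge and, among all of them, select one that is \emph{innermost}, i.e.\ whose descendant set $V_{uv}$ is inclusion-minimal. By the above I may assume $u,v\in S$ and $\emptyset\neq T\subseteq V_{uv}$ (the case $u,v\in T$ is symmetric). Writing $W=V(G_{uv})$ for the immediate copy of $G$ spawned by $\{u,v\}$, I would next argue $T\cap W\neq\emptyset$: otherwise $W\subseteq S$ and $T$ would lie in the union of the descendant sets $V_{xy}$ over the weight-$1$ edges $\{x,y\}$ of $G_{uv}$; since distinct such sub-gadgets meet only in $W\subseteq S$, connectivity of $H[T]$ forces $T\subseteq V_{xy}$ for a single edge $\{x,y\}$, contradicting the minimality of $V_{uv}$.

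The repair is then to move $v$ to the $T$-side, setting $S'=S\setminus\{v\}$ and $T'=T\cup\{v\}$. Because $v$ is incident only to weight-$0$ edges, $w(F')=w(F)$, and the edge $\{u,v\}$ now has separated endpoints, hence is nice. I would then verify that no previously nice edge is spoiled: every weight-$1$ edge incident to $v$ has its other endpoint outside $V_{uv}$ (inside $V_{uv}$ the vertex $v$ has only weight-$0$ neighbours), hence in $S$, so these edges become separated and stay nice; every ancestor edge of $\{u,v\}$ was already separated, since its descendant set contains both $u\in S$ and the nonempty set $T$ disjoint from $S$ and so cannot be all on one side, and its endpoints are untouched; and all descendants of $\{u,v\}$ and all unrelated edges are unaffected, as $v\notin V_{uv}$ is the only vertex that moved. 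Thus the number of nice edges strictly increases, so at most $|E(H)|$ repairs, each clearly polynomial, produce a nice bond.

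The main obstacle is showing that $H[S']$ stays connected after $v$ is deleted; connectivity of $H[T']$ is immediate because $v$ is adjacent to the nonempty set $T\cap W$. For $H[S']$ I would use that $u$ is adjacent to all of $W$ and therefore re-anchors the $S$-part of the gadget that $v$ formerly helped connect, while the vertices outside $V_{uv}$ remain attached through $u$. The delicate point is ensuring the outside does not depend on $v$ as a cut vertex; I would settle this from the symmetric way $u$ and $v$ enter the level-$j$ construction, so that $u$ alone keeps $\{u\}\cup\bigl(V(H)\setminus V_{uv}\bigr)$ connected, and if necessary by choosing which of $u,v$ to move according to which deletion is safe. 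Making this connectivity step fully rigorous is where the real work lies.
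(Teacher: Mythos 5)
Your overall strategy (iteratively repair non-nice weight-$1$ edges without changing the weight, using the facts that $V_{uv}$ attaches to the rest of $H$ only through $u$ and $v$, and that $u,v$ are incident only to weight-$0$ edges) matches the paper's, and your observation that a non-nice edge with $u,v\in S$ forces $T\subseteq V_{uv}$ is exactly the paper's first step. But your proof has a genuine gap, and you name it yourself: you never establish that $H[S\setminus\{v\}]$ is connected, which is precisely what your repair $S'=S\setminus\{v\}$, $T'=T\cup\{v\}$ needs in order for $F'$ to be a bond at all. The appeal to "the symmetric way $u$ and $v$ enter the construction" is not an argument ($v$ may a priori be a cut vertex of $H[S]$, e.g.\ because $v$ anchors sibling gadgets $V_{vz}$ and is itself a gadget vertex of an ancestor copy), and "choose whichever of $u,v$ is safe to move" presupposes, without proof, that at least one of them is safe. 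Since this is the crux of the lemma, the proof is incomplete as written.

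The paper's proof avoids this difficulty entirely with a small but essential twist that your write-up is missing: instead of moving only $v$, it defines $S'$ to be the connected component of $H[S\setminus\{v\}]$ \emph{containing $u$} and sets $T'=V(H)\setminus S'$. Then $H[S']$ is connected by construction, and $H[T']$ is connected because every component of $H[S\setminus\{v\}]$ other than $S'$ is adjacent to $v$ (as $H[S]$ was connected), and $v$ is in turn joined to $U=T$; the weight is preserved because every edge that changes sides is incident to $v$ and hence has weight $0$. With that modification your termination argument would also need rechecking, but it is essentially sound: you select the innermost non-nice edge and count nice edges, whereas the paper selects the edge at the minimum level $j^*$ and shows $j^*$ strictly increases, giving at most $h$ rounds; either bookkeeping is polynomial. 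One further slip: your claim that "every ancestor edge of $\{u,v\}$ was already separated" is false in general — ancestor edges can themselves be non-nice (both endpoints in $S$ with descendants split) — though this does not hurt the counting argument, since such edges merely remain non-nice rather than being spoiled.
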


\begin{proof}
Let $[S,T]$ be the cut induced by $F$ and let $j^*$ be minimum such that there
exists an edge $\{u,v\} \in H^{j^*}$ with weight~$1$ which is not nice for $F$.
Then ${|\{u, v\} \cap S| \ne 1}$. Assume, without loss of generality that $u, v
\in S$. In this case,  $U := V_{uv} \cap T$ is not empty. Since removing
vertices $\{u, v\}$ disconnects $U$, and $T$ must be connected, it follows that
${U = T}$.
This implies that ${N(T) \subseteq (V_{uv} \setminus T)\cup \{u, v\}}$.

We will construct a bond $F'$ of $H$ with cut $[S', T']$.
Let $S'$ be the set of vertices in the connected component of $H[S \setminus
\{v\}]$ which contains~$u$, and ${T' = V(H) \setminus S'}$.
Since $H[S]$ is connected, so must be $H[S \setminus S']$. Also, each vertex
of~$U$ is adjacent~to~$v$, thus $H[(S \setminus S') \cup U]$ is connected.
Observe that $T' = (S \setminus S') \cup U$, so indeed the cut~$[S', T']$
induces a bond ${F' = \partial(S')}$.
Observe that any edge that appears only in $F$ or only in $F'$ is adjacent
to~$v$. Since such edges have weight~$0$, this implies~$w(F) = w(F')$.

To complete the proof, we claim that if for some $j$, $0 \le j \le h$, there
exists an edge $\{u,v\} \in H^{j}$ with weight~$1$ which is not nice for~$F'$,
then $j > j^*$. If this claim holds, then we need to repeat the previous
procedure at most $h$ times before obtaining a nice bond~$F'$.

To prove the claim, consider an edge $\{s,t\} \in H^{j}$ which is not nice
for~$F'$. Suppose, for a contradiction, that ${V_{st} \cap V_{uv} = \emptyset}$.
There are two possibilities.
If $s, t \in S'$, then $V_{st} \subseteq S'$;
if $s, t \in T'$, then $V_{st} \subseteq S \setminus S' \subseteq T'$.
In either possibility, $\{s,t\}$ is nice for~$F'$. This is a contradiction, and
thus ${V_{uv} \cap V_{st} \ne \emptyset}$.

The statement ${V_{uv} \cap V_{st} \ne \emptyset}$ can only happen if $V_{uv}
\subseteq V_{st}$ or $V_{st} \subseteq V_{uv}$. If $V_{uv} \subseteq V_{st}$,
then $U \subseteq V_{st}$ and $s, t \in S$. This implies that $\{s,t\}$ is not
nice for~$F$. But in this case $j < j^*$, contradicting the choice of~$j^*$.
Therefore, $V_{st} \subseteq V_{uv}$, and $j > j^*$, proving our claim.
\end{proof}

In the following, assume that $F$ is a nice bond with cut $[S, T]$.
Consider a level $j$, ${0 \le j \le h}$, and an edge $\{u,v\} \in E(H^j)$ with
weight~$1$ such that ${|\{u, v\} \cap S| = 1}$.
If $j < h$, then we define $F_{uv}$ to be the subset of edges in $F$ which are
incident with some vertex of~$V_{uv}$; if $j = h$, then we define $F_{uv} = \{
\{u,v\}\}$.
Note that, because $F$ is nice, if ${|\{u, v\} \cap S| \ne 1}$, then no edge
of~$F$ is incident with $V_{uv}$.

Suppose now that ${|\{u, v\} \cap S| = 1}$ for some edge $\{u,v\} \in E(H^j)$ with
weight~$1$ and $0 \le j \le h - 1$.
In this case, $F$ induces a cut-set of~$G_{uv}$. Namely, define
$
{\hat S_{uv} := S \cap V(G_{uv})}
$
and
$
{\hat T_{uv} := T \cap V(G_{uv})}
$
and let $\hat F_{uv}$ be the cut-set of $G_{uv}$ corresponding to cut $[\hat
S_{uv} ,\hat T_{uv}]$.

Observe that for distinct edges $\{u,v\}$ and $\{s,t\}$, it is possible that ${|\hat
F_{uv}| \ne |\hat F_{st}|}$.
We will consider bonds $F$ for which all induced cut-sets $\hat F_{uv}$ have the
same size.

\begin{definition}
Let $\ell$ be a positive integer.
A bond~$F$ of $H$ with cut $[S, T]$ is said to be \emph{\mbox{$\ell$-uniform}}
if, \emph{(i)} $F$~is~nice, and $(ii)$ for every $j$, ${0 \le j \le h-1}$, and
every edge $\{u,v\} \in E(H^j)$ with weight~$1$ such that ${|\{u,v\} \cap S\}| =
1}$, $|\hat F_{uv}| = \ell$.
\end{definition}

An $\ell$-uniform bond induces a cut-set of $G$ of size $\ell$.

\begin{lemma}\label{lem-induced-cut}
Suppose $F$ is an $\ell$-uniform bond of $H$. One can find in polynomial time a
cut-set $L$ of $G$ with $|L| = \ell$.
\end{lemma}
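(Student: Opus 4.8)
The plan is to produce the cut-set $L$ by exhibiting a single embedded copy of $G$ inside $H$ whose two anchor vertices are separated by the cut $[S,T]$, and then transporting the induced cut-set through the isomorphism. The natural copy to use is the one sitting on the base edge: let $e_0 = \{u_0,v_0\}$ be the single weight-$1$ edge of $K_2 = H^0$, which lives at level $j=0$, and let $G_{u_0 v_0}$ be the copy of $G$ that replaces $e_0$ when passing from $H^0$ to $H^1$. If I can show $e_0$ is separated, then $\ell$-uniformity applied at level $0$ immediately gives $|\hat F_{u_0 v_0}| = \ell$, and since $G_{u_0 v_0}\cong G$, the preimage of $\hat F_{u_0 v_0}$ under the isomorphism is the desired $L$.

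The first and central step is therefore to argue that $|\{u_0,v_0\}\cap S| = 1$. Here I would combine two facts. First, because $F$ is a bond, the cut $[S,T]$ is nontrivial, so both $S$ and $T$ are nonempty. Second, because $F$ is nice, the weight-$1$ edge $e_0$ is nice for $F$, so one of the three alternatives in the definition of niceness must hold. I would rule out the two non-separated alternatives using the identity $\{u_0,v_0\}\cup V_{u_0 v_0} = V(H)$: by definition $V_{u_0 v_0}$ consists of every vertex created in iterations $1,\dots,h$, and the only vertices already present in $H^0$ are $u_0$ and $v_0$, so together they exhaust $V(H)$. Consequently, if $e_0$ were not separated, niceness would force $V(H)\subseteq S$ or $V(H)\subseteq T$, making $T$ or $S$ empty and contradicting that $F$ is a nonempty bond. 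Hence $e_0$ is separated.

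With $e_0$ separated, the remaining steps are bookkeeping. I would invoke condition (ii) of $\ell$-uniformity for the level-$0$ edge $e_0$, obtaining $|\hat F_{u_0 v_0}| = \ell$, where $\hat F_{u_0 v_0}$ is the cut-set of $G_{u_0 v_0}$ determined by $\hat S_{u_0 v_0} = S\cap V(G_{u_0 v_0})$ and $\hat T_{u_0 v_0} = T\cap V(G_{u_0 v_0})$. Since $\ell\ge 1$, this cut is genuinely nontrivial on $G_{u_0 v_0}$. Finally, because $G_{u_0 v_0}$ is literally an isomorphic copy of $G$, I would read off $\hat S_{u_0 v_0}$, push it through the fixed isomorphism $G_{u_0 v_0}\to G$, and let $L$ be the corresponding cut-set of $G$; then $|L| = |\hat F_{u_0 v_0}| = \ell$, and the whole computation is clearly polynomial.

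I expect the only real obstacle to be the first step, namely pinning down that the base anchors necessarily lie on opposite sides of the cut; everything else follows formally from $\ell$-uniformity and from $G_{u_0 v_0}$ being a copy of $G$. For completeness I would also note that the degenerate case $h=0$ is excluded (then $H=K_2$ has no level $j\le h-1$, and the lemma is only applied with $h\ge 1$), so that $e_0$ always falls within the range $0\le j\le h-1$ over which uniformity quantifies.
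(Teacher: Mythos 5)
Your proof is correct and follows essentially the same route as the paper's, which simply takes the base edge $\{u,v\}$ of $K_2$ and reads off $\hat F_{uv}$ as the desired cut-set of the copy $G_{uv}\cong G$. Your additional step---using niceness together with the identity $\{u,v\}\cup V_{uv}=V(H)$ to conclude that the base edge must be separated by the cut, so that the uniformity condition actually applies at level $0$---fills in a detail the paper's one-line proof leaves implicit.
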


\begin{proof}
Let $u,v$ be the vertices of $K_2$ to which $\xig$ was applied. Since $F$ is
$\ell$-uniform, $|\hat F_{uv}|= \ell$. Note that $\hat F_{uv}$ induces a cut-set
of size $\ell$ on $G$.
\end{proof}

In the opposite direction, a cut of $G$ induces an $\ell$-uniform bond~of~$H$.

\begin{lemma}\label{lem-induced-bond}
Suppose $L$ is a cut-set of $G$ with $|L| = \ell$. One can find in polynomial
time an $\ell$-uniform bond $F$ of $H$ with $w(F) = \ell^h$.
\end{lemma}

\begin{proof}
For each $j \ge 0$, we construct a bond~$F^j$~of~$H^j$.
For $j = 0$, let $F^0$ be the set containing the unique edge of $H^0 = K_2$.
Suppose now that we already constructed a bond~$F^{j-1}$ of~$H^{j-1}$. For each
edge $\{u,v\} \in F^{j-1}$, let $L_{uv}$ be the set of edges of $G_{uv}$
corresponding to~$L$. Define $F^j := \cup_{\{u,v\} \in F^{j-1}} L_{uv}$.
One can verify that indeed $F^j$ is a bond of $H^j$, and that $w(F_j) = |L|
\times w(F_{j-1}) = \ell^j$.
\end{proof}

\begin{lemma} \label{lem-uniform}
There is a polynomial-time algorithm that receives a bond~$F$~of~$H$, and finds
an $\ell$-uniform bond ${F'}$~of~$H$ such that $w(F') = \ell^h \ge w(F)$.
\end{lemma}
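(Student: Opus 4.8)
The plan is to bootstrap from the three preceding lemmas: first normalise $F$, then read off from it a single cut of $G$ whose size controls the weight, and finally rebuild a uniform bond from that cut. First I would apply Lemma~\ref{lem-edge-cut} to replace $F$ by a \emph{nice} bond of the same weight, which I continue to call $F$, with induced cut $[S,T]$. Niceness is precisely the hypothesis that makes a level-by-level analysis possible: every weight-$1$ edge $\{u,v\} \in E(H^j)$ with $j < h$ is now either cut (so $|\{u,v\}\cap S| = 1$) or has all of $\{u,v\} \cup V_{uv}$ on a single side of the cut.

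Next I would set $\ell := \max |\hat F_{uv}|$, where the maximum ranges over all weight-$1$ edges $\{u,v\} \in E(H^j)$, $0 \le j \le h-1$, that are cut by $F$ (we may assume $w(F) \ge 1$, so this set is nonempty and $\ell \ge 1$). An edge attaining the maximum exhibits a cut-set $\hat F_{uv}$ of the copy $G_{uv}$, which is isomorphic to $G$, and thus yields in polynomial time a concrete cut-set $L$ of $G$ with $|L| = \ell$.

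The core of the argument is the bound $w(F) \le \ell^h$, which I would establish by induction along the descendant structure of the edges. For a cut weight-$1$ edge $e = \{u,v\} \in E(H^j)$, let $W(e)$ be the number of weight-$1$ edges of $H = H^h$ that both lie in $F$ and are descendants of $e$. If $j = h$, then $e$ is itself such an edge and $W(e) \le 1 = \ell^0$. If $j < h$, then $e$ has been replaced by $G_{uv}$; since $F$ is nice, every edge of $G_{uv}$ that is \emph{not} cut has all its descendants on one side and so contributes nothing, whence only the at most $\ell$ edges of $\hat F_{uv}$ matter. The inductive hypothesis then gives $W(e) = \sum_{e' \in \hat F_{uv}} W(e') \le |\hat F_{uv}| \cdot \ell^{h-j-1} \le \ell^{h-j}$. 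Applying this to the unique cut edge of $H^0 = K_2$ yields $w(F) = W(\text{root}) \le \ell^h$.

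Finally I would feed the cut-set $L$ of size $\ell$ into Lemma~\ref{lem-induced-bond}, obtaining in polynomial time an $\ell$-uniform bond $F'$ of $H$ with $w(F') = \ell^h$; combined with the displayed bound this gives $w(F') = \ell^h \ge w(F)$, as required, and the entire procedure consists of a constant number of polynomial-time steps. I expect the weight bound $w(F) \le \ell^h$ to be the main obstacle: it demands a careful formalisation of the recursive edge-replacement (descendant) tree and a repeated appeal to niceness, so that the branching factor at each level is governed by the induced cut sizes $|\hat F_{uv}| \le \ell$ rather than by the full edge count $|E(G)|$.
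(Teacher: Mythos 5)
Your proof is correct and follows essentially the same route as the paper's: normalise to a nice bond, let $\ell$ be the maximum of $|\hat F_{uv}|$ over all cut weight-$1$ edges, prove $w(F) \le \ell^{h}$ by backward induction on the level (using niceness to argue that only the edges of $\hat F_{uv}$ can carry descendants of $F$), and then invoke Lemma~\ref{lem-induced-bond} on the maximising cut-set. The only cosmetic difference is that you explicitly invoke Lemma~\ref{lem-edge-cut} to make $F$ nice, whereas the paper states this as a standing assumption just before the lemma.
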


\begin{proof}
Let $[S,T]$ be the cut corresponding to $F$. First, find the largest cut-set of
a graph~$G_{uv}$ over cut-sets~$\hat F_{uv}$. More precisely, define $\hat F$ to
be the cut-set $\hat F_{uv}$ with maximum $|\hat F_{uv}|$ over all edges $\{u,v\}
\in E(H^j)$ with weight~$1$ such that ${|\{u,v\} \cap S\}| = 1}$, and over all
$j$, ${0 \le j \le h-1}$.
Let $\ell := |\hat F|$.

We claim that for every $j$, $0 \le j \le h$, and every edge $\{u,v\} \in E(H^j)$
with weight~$1$ such that ${|\{u,v\} \cap S\}| = 1}$, $w(F_{uv}) \le \ell^{h -
j}$.
The proof is by (backward) induction on $j$. For $j = h$, $F_{uv} = \{u,v\}$, so
$w(F_{uv}) = 1$. Next, let $j < h$, and assume the claim holds for $j+1$.

Let $F^0_{uv}$ be the subset of edges in $F_{uv}$ incident with $u$ or $v$.
The set $F_{uv}$ can be partitioned into $F^0_{uv}$ and sets $F_{st}$ for
$\{s,t\} \in \hat F_{uv}$. To see this, observe that each edge $\{x, y\} \in
F_{uv} \setminus F^0_{uv}$ must be incident with descendants of $\{u,v\}$, and
thus $\{x, y\}$ is incident with vertices of~$V_{st}$, for some edge $\{s,t\}
\in E(G_{uv})$. Since $|\{x, y\}\cap S| = 1$, neither $V_{st} \cup \{s, t\}
\subseteq S$, nor $V_{st} \cup \{s, t\} \subseteq T$. Because $F$ is nice, it
follows that $|\{s,t\} \cap S| = 1$, then $\{s,t\} \in \hat F_{uv}$, and thus
$\{x,y\} \in F_{st}$.
To complete the claim, observe that, by the induction hypothesis, ${w(F_{st}) \le
\ell^{h - j - 1}}$ for each $\{s,t\} \in \hat F_{uv}$, and recall that $|\hat
F_{uv}| \le |\hat F|$. Therefore
\[
w(F) = w(F^0_{uv}) + \sum_{\{s,t\} \in \hat F_{uv}} w(F_{st})
     \le |\hat F| \times \ell^{h - j - 1} =  \ell^{h - j}.
\]

Using Lemma~\ref{lem-induced-bond} for $\hat F$, we construct a bond~$F'$
for~$H$ with ${w(F') = \ell^h}$.
\end{proof}

\begin{lemma}\label{lem-opt-sol}
Let $F^*$ be a bond of $H$ with maximum weight. Then $w(F^*) = k^h$.
\end{lemma}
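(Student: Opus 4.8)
The plan is to establish the two inequalities $w(F^*) \le k^h$ and $w(F^*) \ge k^h$ separately. Since every tool needed has already been developed in Lemmas~\ref{lem-induced-cut}, \ref{lem-induced-bond}, and~\ref{lem-uniform}, the argument should amount to assembling these statements together with the optimality of $F^*$, rather than to introducing any genuinely new idea.

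For the upper bound, I would feed the maximum-weight bond $F^*$ into Lemma~\ref{lem-uniform}, which produces an $\ell$-uniform bond $F'$ with $w(F') = \ell^h \ge w(F^*)$. Because $F^*$ has maximum weight among all bonds of $H$, the reverse inequality $w(F^*) \ge w(F')$ holds as well, which forces $w(F^*) = \ell^h$. I would then apply Lemma~\ref{lem-induced-cut} to the $\ell$-uniform bond $F'$ to extract a cut-set $L$ of $G$ with $|L| = \ell$. Since $k$ is by definition the size of a maximum cut of $G$, every cut-set of $G$ has size at most $k$; hence $\ell \le k$, and therefore $w(F^*) = \ell^h \le k^h$.

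For the lower bound, I would start from a maximum cut of $G$, that is, a cut-set $L$ with $|L| = k$, and use Lemma~\ref{lem-induced-bond} to construct a $k$-uniform bond $F$ of $H$ with $w(F) = k^h$. The optimality of $F^*$ then gives $w(F^*) \ge w(F) = k^h$. Combining the two bounds yields $w(F^*) = k^h$, as claimed.

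The steps are individually routine; the only points requiring care are the direction of the inequality in Lemma~\ref{lem-uniform} (which, together with the maximality of $F^*$, is precisely what upgrades the one-sided estimate into the equality $w(F^*) = \ell^h$) and the observation that the object $L$ returned by Lemma~\ref{lem-induced-cut} is a \emph{genuine} cut-set of $G$, so that its size is controlled by the maximum cut value $k$. I do not expect any substantive obstacle beyond keeping these bookkeeping details straight.
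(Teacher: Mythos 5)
Your proposal is correct and follows essentially the same route as the paper: apply Lemma~\ref{lem-uniform} to reduce to an $\ell$-uniform bond, extract a cut-set of size $\ell \le k$ via Lemma~\ref{lem-induced-cut} for the upper bound, and push a maximum cut of $G$ through Lemma~\ref{lem-induced-bond} for the lower bound. If anything, your write-up is slightly more careful than the paper's, since you make explicit the maximality argument needed to conclude $w(F^*) = \ell^h$ after passing to the uniform bond $F'$.
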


\begin{proof}
We assume that $F^*$ is $\ell$-uniform such that $w(F^*) = \ell^h$ for some
$\ell$; if this is not the case, then use Lemma~\ref{lem-uniform}.

Since $F^*$ is $\ell$-uniform, using Lemma~\ref{lem-induced-bond} one obtains a
cut-set $L$ of $G$ with size~$\ell$, then~$\ell \le k$, and thus $w(F^*) \le
k^h$.

Conversely, let $L$ be a cut-set of $G$ with size $k$. Using
Lemma~\ref{lem-induced-bond} for $L$, we obtain a bond~$F$ of~$H$ with weight
$k^h$, and thus $w(F^*) \ge k^h$.
\end{proof}

\begin{lemma}\label{lem-weighted-hard}
If there exists a constant-factor approximation algorithm for {\sc Weighted
Largest Bond}, then $\PP = \NP$.
\end{lemma}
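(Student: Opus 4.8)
The plan is to use the operator $\xig$ as a gap-amplification gadget, exploiting the exact identity $w(F^*) = k^h$ from Lemma~\ref{lem-opt-sol}: because the optimal bond weight of $H = \xig^h(K_2)$ is the $h$-th power of the maximum cut $k$ of $G$, a multiplicative approximation for the bond problem translates into an approximation for {\sc Maximum Cut} whose ratio is the $h$-th \emph{root} of the original ratio. Since {\sc Maximum Cut} is \NP-hard to approximate within any factor better than $16/17$ (H{\aa}stad's theorem via the PCP theorem), taking the root past this fixed threshold will force $\PP = \NP$.

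Concretely, I would argue by contradiction. Suppose some algorithm $\mathcal{A}$ approximates {\sc Weighted Largest Bond} within a constant factor $c$, $0 < c \le 1$, i.e.\ on every input it returns a bond of weight at least $c$ times the optimum. First I would fix an integer constant $h$ large enough that $c^{1/h} > 16/17$; such an $h$ exists because $c^{1/h} \to 1$ as $h \to \infty$, and it depends only on $c$ and not on the instance. Then, given an arbitrary {\sc Maximum Cut} instance $G$ with (unknown) maximum cut $k$, I would build $H = \xig^h(K_2)$; since $h$ is a fixed constant, this takes $\Oh(|V(G)|^{h+1})$ time and is therefore polynomial. Running $\mathcal{A}$ on $H$ yields a bond $F$ with $w(F) \ge c\,k^h$ by Lemma~\ref{lem-opt-sol}. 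Applying Lemma~\ref{lem-uniform} converts $F$ into an $\ell$-uniform bond $F'$ with $w(F') = \ell^h \ge w(F) \ge c\,k^h$, so $\ell \ge c^{1/h} k$. Finally, Lemma~\ref{lem-induced-cut} extracts in polynomial time a cut-set of $G$ of size $\ell \ge c^{1/h} k > \tfrac{16}{17}\,k$. This is a polynomial-time approximation for {\sc Maximum Cut} with ratio strictly exceeding $16/17$, contradicting the inapproximability of {\sc Maximum Cut} unless $\PP = \NP$.

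The crucial point — and the step I expect to carry the whole argument — is the gap amplification encoded in the multiplicative bound $w(F^*) = k^h$: a constant loss $c$ at the top level is spread geometrically across the $h$ nested copies, so that in terms of the induced cut the loss shrinks to $c^{1/h}$, which a suitable constant $h$ pushes arbitrarily close to $1$. Everything else is bookkeeping already supplied by Lemmas~\ref{lem-uniform} and~\ref{lem-induced-cut}. The one subtlety to verify carefully is that $h$ can be chosen as an instance-independent constant, keeping the reduction polynomial, while still clearing the fixed hardness threshold below $1$; this is exactly what the convergence $c^{1/h}\to 1$ guarantees, since that threshold is a fixed constant strictly less than $1$.
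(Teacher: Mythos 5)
Your proposal is correct and follows essentially the same route as the paper: construct $H=\xig^h(K_2)$ for a constant $h$ depending only on the approximation ratio, use Lemma~\ref{lem-opt-sol} to get $w(F)\ge c\,k^h$, uniformize via Lemma~\ref{lem-uniform}, and extract a cut of size $\ell\ge c^{1/h}k$ via Lemma~\ref{lem-induced-cut}. The only cosmetic difference is that you invoke the explicit $16/17$ inapproximability threshold where the paper argues generically from APX-hardness with an arbitrary $\varepsilon$ and sets $h=\lceil\log_{1-\varepsilon}\alpha\rceil$.
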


\begin{proof}
Consider a graph $G$ whose maximum cut has size~$k$. Construct graph~$H$ and
obtain a bond $F$~of~$H$ using an $\alpha$-approximation, for some constant ${0
< \alpha < 1}$. \mbox{Using} the algorithm of Lemma~\ref{lem-uniform}, obtain an
$\ell$-uniform bond $F'$~of~$H$ such that $w(F') = \ell^h \ge w(F)$.
Using Lemma~\ref{lem-opt-sol} and the fact that $F'$ is an
$\alpha$-approximation, ${\ell^h \ge \alpha \times k^h}$.
Using Lemma~\ref{lem-induced-cut}, one can obtain a cut-set $L$ of $G$ with size
$\ell \ge \alpha^{\frac{1}{h}} k$.

For any constant $\varepsilon$, ${0 < \varepsilon < 1}$, we can set $h =
\lceil\log_{1 - \varepsilon} \alpha \rceil$, such that the cut-set~$L$ has size
at least $\ell \ge (1 - \varepsilon) k$. Since {\sc Maximum Cut} is APX-hard,
this implies $\PP = \NP$.
\end{proof}

\begin{theorem}
If there exists a constant-factor approximation algorithm for {\sc Largest
Bond}, then $\PP = \NP$.
\end{theorem}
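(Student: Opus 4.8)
The plan is to reduce {\sc Binary Weighted Largest Bond} to the unweighted {\sc Largest Bond}, so that a constant-factor approximation for the latter yields one for the former; the theorem then follows immediately from Lemma~\ref{lem-weighted-hard}. The idea is to realize the weights $0$ and $1$ combinatorially: a weight-$1$ edge should contribute a large, amplified amount to any bond, while a weight-$0$ edge should contribute a negligible amount yet still be available to keep both sides connected.

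Concretely, given a binary weighted connected graph $H$ with weights in $\{0,1\}$, I would build a simple unweighted graph $H'$ as follows. Fix an integer $M > |E(H)|$ (for instance $M=|E(H)|+1$). Keep every weight-$0$ edge of $H$ unchanged, and replace every weight-$1$ edge $\{u,v\}$ by a \emph{bundle}: $M$ internally disjoint paths of length two from $u$ to $v$, i.e.\ $M$ fresh degree-two vertices each adjacent to both $u$ and $v$. The graph $H'$ is simple, connected, and of size polynomial in $|H|$, so it is constructible in polynomial time.

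The heart of the argument is a correspondence between bonds of $H$ and bonds of $H'$. In one direction, a bond $[S,T]$ of $H$ lifts to a bond of $H'$ by placing each bundle vertex on the side of the endpoint it is to follow; one checks that a weight-$1$ edge with $u\in S,\ v\in T$ forces exactly $M$ bundle edges to cross, whereas a weight-$1$ edge with both endpoints on one side forces all its (degree-two) bundle vertices onto that side and contributes nothing. In the other direction, a bond $[S',T']$ of $H'$ projects to $[S,T]=[S'\cap V(H),\,T'\cap V(H)]$; since the bundle vertices form an independent set of degree two, any path inside $H'[S']$ (resp.\ $H'[T']$) projects to a walk inside $H[S]$ (resp.\ $H[T]$), so both sides stay connected and $[S,T]$ is a bond of $H$. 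In either direction one obtains
\[
|\partial(S')| \;=\; M\cdot w(\partial(S)) \;+\; c, \qquad 0\le c\le |E_0| < M,
\]
where $w(\partial(S))$ is the number of crossing weight-$1$ edges, $E_0$ is the set of weight-$0$ edges, and $c$ is the number of crossing weight-$0$ edges. Consequently $M\cdot\mathrm{OPT}_w \le \mathrm{OPT}(H') < M\cdot(\mathrm{OPT}_w+1)$, where $\mathrm{OPT}_w$ is the optimum weighted bond value of $H$. The step I expect to be most delicate is this projection: one must verify that connectivity of $H'[T']$ genuinely transfers to $H[T]$ through the degree-two bundle vertices, and one must discard the degenerate bonds in which a side is a single bundle vertex (these have size $2$ and map to an empty side of $H$). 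Such bonds are harmless, since any bond returned by the approximation has size at least $\alpha M>2$ and therefore has an original vertex on each side.

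Finally, I would transfer the approximation guarantee. Given a constant-factor $\alpha$-approximation for {\sc Largest Bond}, run it on $H'$ to obtain a bond $F'$ with $|\partial(S')|\ge \alpha\cdot\mathrm{OPT}(H') \ge \alpha M\cdot \mathrm{OPT}_w$, and project it to a weighted bond $F$ of $H$. From the displayed identity and $c<M$ we get $w(F) > \alpha\,\mathrm{OPT}_w-1$. Whenever $\mathrm{OPT}_w$ is large---which is exactly the situation of the instances $H=\xig^h(K_2)$ produced in Lemma~\ref{lem-weighted-hard}, where $\mathrm{OPT}_w=k^h$---this is a genuine constant-factor (say $\tfrac{\alpha}{2}$) approximation of $\mathrm{OPT}_w$. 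Thus a constant-factor approximation for {\sc Largest Bond} yields a constant-factor approximation for {\sc Binary Weighted Largest Bond}, and Lemma~\ref{lem-weighted-hard} then forces $\PP=\NP$.
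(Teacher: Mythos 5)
Your proposal is correct and matches the paper's proof essentially verbatim: the paper likewise reduces {\sc Binary Weighted Largest Bond} to the unweighted problem by replacing each weight-$1$ edge with $m$ parallel edges and then subdividing every edge (which is exactly your bundle of internally disjoint length-two paths), derives $w(F')\ge(|F|-m)/m$ for the projected bond, and concludes via Lemma~\ref{lem-weighted-hard}. The only difference is how the small-optimum case is dispatched: the paper brute-forces instances with optimum below $2/\alpha$ so that the reduction is a bona fide $\alpha/2$-approximation on \emph{all} weighted instances before invoking Lemma~\ref{lem-weighted-hard} as a black box, whereas you observe that the hard instances $\xig^h(K_2)$ already have large optimum --- either patch works.
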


\begin{proof}
We show that if there exists an $\alpha$-approximation algorithm for {\sc
Largest Bond}, for constant $0 < \alpha < 1$, then there is an
$\alpha/2$-approximation algorithm for the {\sc Binary Weighted Largest Bond}, so the
theorem will follow from Lemma~\ref{lem-weighted-hard}.

Let $H$ be a weighted graph whose edge weights are all $0$~or~$1$. Let $m$ be
the number of edges with weight~$0$, and let $l$ be the weight of a bond of
$H$ with maximum weight.
Assume $l\ge 2/\alpha$, as otherwise, one can find an optimal solution in
polynomial time by enumerating sets of up to $2/\alpha$ edges.

Construct an unweighted graph $G$ as follows. Start with a copy of $H$ and, for
each edge $\{u,v\} \in E(H)$ with weight $1$, replace $\{u,v\} \in E(G)$ by $m$
parallel edges. Finally, to obtain a simple graph, subdivide each edge of $G$.
If $F$ is a bond of~$G$, then one can construct a bond~$F'$ of~$H$ by undoing
the subdivision and removing the parallel edges. Each edge of $F'$ has weight
$1$, with exception of at most $m$ edges. Thus,
$
w(F') \ge (|F| - m) / m.
$

Observe that an optimal bond of $H$ induces a bond of $G$ with size at
least~$ml$. Thus, if~$F$ is an $\alpha$-approximation for $G$, then ${|F| \ge
\alpha m l}$ and therefore
\[
w(F') \ge \frac{\alpha m l - m}{m} = \alpha l - 1 \ge \alpha l - \alpha l / 2 = \alpha l / 2.
\]
We conclude that $F'$ is an $\alpha/2$-approximation for $H$.
\end{proof}

\section{Algorithmic upper bounds for clique-width parameterization}

Lemma~\ref{lb_cw} shows that {\sc Largest Bond} on graphs of clique-width $w$ cannot be solved in time $f(w)\times n^{o(w)}$ unless the ETH fails. Now, we show that given an expression tree of width at most $w$, {\sc Largest Bond} can be solved in $f(w)\times n^{O(w)}$ time.


An expression tree $\mathcal{T}$ is irredundant if for any join node $\eta(i,j)$, the vertices labeled
by~$i$ and~$j$ are not adjacent in the graph associated with its child. It was shown by Courcelle and Olariu~\cite{courcelle2000upper} that every expression tree $\mathcal{T}$ of $G$ can be transformed into an irredundant expression tree $\mathcal{T}$ of the same width in time linear in the size of~$\mathcal{T}$. Therefore, without loss of generality, we can assume that $\mathcal{T}$ is irredundant.

Our algorithm is based on dynamic programming over the expression tree of the input graph. We first describe what we store in the tables corresponding to the nodes in the expression tree.

Given a $w$-labeled graph $G$, two connected components of $G$ has the same {\em type} if they have the same set of labels. Thus, a $w$-labeled graph $G$ has at most $2^w-1$ types of connected components.

Now, for every node $X_\ell$ of $\mathcal{T}$, denote by $G_{X_\ell}$ the $w$-labeled graph associated with this node, and let $L_1(X_\ell),\ldots,L_w(X_\ell)$ be the sets of vertices of $G_{X_\ell}$ labeled with $1,\ldots,w$, respectively. We define a table where each entry is of the form $c[\ell,s_{1}, ..., s_{w}, r, e_{1}, ..., e_{2^w-1}, d_{1}, ..., d_{2^w-1}]$, such that: $0\leq s_i \leq |L_i(X_\ell)|$ for $1\leq i\leq w$; $0\leq r\leq |E(G_{X_\ell})|$; $0\leq e_{i}\leq \min\{2,|L_i(X_\ell)|\}$ for $1\leq i\leq 2^w-1$; and $0\leq d_{i}\leq \min\{2,|L_i(X_\ell)|\}$ for $1\leq i\leq 2^w-1$.

Each entry of the table represents whether there is a partition $V_1,V_2$ of $V(G_{X_\ell})$ such that: $|V_1\cap L_i(G_{X_\ell})| = s_i$; the cut-set of $[V_1,V_2]$ has size at least $r$;  $G_{X_\ell}[V_1]$ has $e_i$ connected components of type $i$; $G_{X_\ell}[V_2]$ has $d_i$ connected components of type $i$, where $e_i=2$ means that $G_{X_\ell}[V_1]$ has {\em at least} two connected components of type $i$. The same holds for $d_i$.

Notice that this table contains $f(w)\times n^{\Oh(w)}$ entries.
If $X_\ell$ is the root node of $\mathcal{T}$ (that is, $G = G_{X_\ell})$, then the size of the largest bond of $G$ is equal to the maximum value of $r$ for which the table for $X_\ell$ contains a valid entry (true value), such that there are $j$ and $k$ such that $e_{i} = 0$, $e_{j} = 1$ for $1 \leq i, j \leq 2^w-1$, $i \neq j$; and $d_{i} = 0$, $d_{k} = 1$ for $1\leq i, k \leq 2^w-1$,~$i \neq k$.

It is easy to see that we store enough information to compute a largest bond. Note that a $w$-labeled graph is connected if and only if it has exactly one type of connected components and exactly one component of such a type.

Now we provide the details of how to construct and update such tables. The construction for introduce nodes of $\mathcal{T}$ is straightforward.

\textbf{Relabel node:} Suppose that $X_\ell$ is a relabel node $\rho(i,j)$, and let $X_{\ell'}$ be the child of
$X_\ell$. Then the table for $X_\ell$ contains a valid entry $c[\ell,s_{1}, ..., s_{w}, r, e_{1}, ..., e_{2^w-1}, d_{1}, ..., d_{2^w-1}]$ if and only if  the table for $X_{\ell'}$ contains an entry $c[\ell',s'_{1}, ..., s'_{w}, r, e'_{1}, ..., e'_{2^w-1}, d'_{1}, ..., d'_{2^w-1}]=true$, where:
$s_i = 0$;
$s_{j}=s'_{i}+s'_{j}$;
$s_{p} = s'_{p}$ for $1\leq p \leq w$, $p\neq i, j$;
$e_p=e'_p$ for any type that contain neither $i$ nor $j$;
$e_p=0$ for any type that contains $i$;
and for any type $e_p$ that contains $j$, it holds that $e_p=\min\{2, e'_p+e'_q+e'_r\}$ where
$e'_q$ represent the set of labels $(C_p\setminus \{j\})\cup \{i\}$,
$e'_r$ represent the set of labels $C_p\cup \{i\}$, and
$C_p$ is the set of labels associated to $p$.
The same holds for $d_{1}, ..., d_{2^w-1}$.

\textbf{Union node:} Suppose that $X_\ell$ is a union node with children $X_{\ell'}$ and $X_{\ell''}$.
It holds that $c[\ell,s_{1}, ..., s_{w}, r, e_{1}, ..., e_{2^w-1}, d_{1}, ..., d_{2^w-1}]$ equals true if and only if there are valid entries $c[\ell', s'_{1}, ..., s'_{w}, r', e'_{1}, ..., e'_{2^w-1}, d'_{1}, ..., d'_{2^w-1}]$ and $c[\ell'',s''_{1}, ..., s''_{w}, r'', e''_{1}, ..., e''_{2^w-1}, d''_{1}, ..., d''_{2^w-1}]$, having:
$s_i=s'_{i}+s''_{i}$ for $1 \leq i \leq w$;
$r' + r'' \geq r$;
$e_{k} = \min\{2, e'_{k}+e''_{k}\}$, and
$d_{k} = min\{2, d'_{k} + d''_{k}\}$ for $1\leq k \leq 2^w-1$.

\textbf{Join node:}
Finally, let $X_\ell$ be a join node $\eta(i,j)$ with the child $X_{\ell'}$. Remind that since the expression tree is irredundant then the vertices labeled by $i$ and $j$ are not adjacent in the graph $G_{X_{\ell'}}$. Therefore, the entry $c[\ell,s_{1}, ..., s_{w}, r, e_{1}, ..., e_{2^w-1}, d_{1}, ..., d_{2^w-1}]$ equals true if and only if there is a valid entry $c[\ell',s_{1}, ..., s_{w}, r', e'_{1}, ..., e'_{2^w-1}, d'_{1}, ..., d'_{2^w-1}]$ where
$$r' + s_{i}\times(|L_{j}(X_{\ell'})| - s_{j}) + s_{j}\times(|L_{i}(X_{\ell'})| - s_{i}) \geq r,$$
and
$e_p=e'_p$, case $p$ is associated to a type that contains neither $i$ nor $j$;
$e_{p} = 1$, case $p$ is associated to $C^{\ell'}_{i,j}\setminus \{i\}$, where $C^{\ell'}_{i,j}$ is the set of labels obtained by the union of the types of $G_{X_{\ell'}}$ with some connected component having either label $i$ or label $j$;
$e_{p} = 0$, otherwise.
The same holds for $d_{1}, ..., d_{2^w-1}$.

The correctness of the algorithm follows from the description of the procedure. Since for each $\ell$, there are $\Oh((n+1)^w\times m\times (3^{2^w-1})^2)$ entries, the running time of the algorithm is $f(w)\times n^{\Oh(w)}$. This algorithm  together with Lemma~\ref{lb_cw} concludes the proof of the Theorem~\ref{cliquewidth_up}.

\begin{theorem}\label{cliquewidth_up}
{\sc Largest Bond} cannot be solved in time $f(w)\times n^{o(w)}$ unless ETH fails, where $w$ is the clique-width of the input graph. Moreover, given an expression tree of width at most $w$, {\sc Largest Bond} can be solved in time $f(w)\times n^{\Oh(w)}$.
\end{theorem}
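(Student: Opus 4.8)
The lower bound is already in hand: by Lemma~\ref{lb_cw}, \textsc{Largest Bond} admits no $f(w)\times n^{o(w)}$ algorithm on graphs of clique-width $w$ unless the ETH fails. So the plan is entirely to establish the matching upper bound, namely a dynamic program over an irredundant $w$-expression tree $\mathcal{T}$ of $G$ running in time $f(w)\times n^{\Oh(w)}$. The heart of the design is choosing a state rich enough to certify, at the root, that a partition $[V_1,V_2]$ is a bond of a prescribed size, i.e.\ that both $G[V_1]$ and $G[V_2]$ are connected while $|\partial(V_1)|$ is large. First I would fix what to record for each node $X_\ell$: the profile $(s_1,\dots,s_w)$ counting how many vertices of each label lie in $V_1$, a lower bound $r$ on the cut size achieved so far, and---the crucial ingredient---connectivity data for both sides simultaneously.

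For the connectivity data I would exploit that, in a $w$-labeled graph, all future operations treat two vertices of the same label identically; hence two connected components are interchangeable precisely when they carry the same set of labels. This gives at most $2^w-1$ component \emph{types}, and for each side I record, per type, whether it occurs $0$, $1$, or at least $2$ times (vectors $(e_p)$ for $V_1$ and $(d_p)$ for $V_2$). Capping the counts at $2$ is the key space-saving move, and it is also where correctness is most delicate; I return to why it loses no information below. Together these indices give $(n+1)^{w}\cdot m\cdot\bigl(3^{2^w-1}\bigr)^2 = f(w)\times n^{\Oh(w)}$ entries per node, which, with the per-node work polynomial in the number of entries, pins down the running time.

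Next I would fill the table bottom-up. Introduce nodes are base cases (a single labelled vertex placed in $V_1$ or $V_2$). At a union node the two children are vertex-disjoint with no new edges, so profiles and cut sizes add ($r'+r''\ge r$) and per-type component counts add, truncated at $2$. At a relabel node $\rho(i,j)$ the partition and the actual edge set are unchanged, so I only re-index the type vectors, summing the entries of types that become equal after identifying labels $i$ and $j$ (again capping at $2$). The only genuinely substantive transition is the join $\eta(i,j)$, which simultaneously (a) contributes $s_i(|L_j(X_{\ell'})|-s_j)+s_j(|L_i(X_{\ell'})|-s_i)$ new crossing edges to $r$, and (b) on each side collapses \emph{every} component carrying label $i$ or label $j$ into a single new component whose type is the union of their label sets; irredundancy guarantees there were no $i$-$j$ edges before, so nothing is double-counted. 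At the root I read off the largest $r$ among entries where each side has exactly one component overall---exactly one type index equals $1$ and all others equal $0$, for both $(e_p)$ and $(d_p)$.

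The step I expect to be the main obstacle---and the one I would write most carefully---is verifying that truncating component counts at $2$ preserves correctness despite the merging at join nodes. The argument I plan to give is twofold. For the global test at the root we only need to distinguish ``one component'' from ``two or more'', so the cap suffices there. For the join transition, the new type is the \emph{union} of the merged types, which depends only on which types are present (count $\ge 1$), not on their multiplicities; and since a merge requires both labels $i$ and $j$ to be present on that side, the union contains $\{i,j\}$, whereas every surviving (non-merging) type contains neither $i$ nor $j$---so the merged component's type cannot collide with a surviving one, and its count is exactly $1$. Hence several same-type components either all coalesce into one at a join (making the true count $1$, correctly recovered) or persist and are faithfully reported as ``at least $2$''. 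With this, correctness follows by induction over $\mathcal{T}$ from the transition rules, and combining the dynamic program with Lemma~\ref{lb_cw} yields the theorem.
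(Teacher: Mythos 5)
Your proposal is correct and follows essentially the same route as the paper: the identical dynamic program over an irredundant $w$-expression with states $(s_1,\dots,s_w,r,e_1,\dots,e_{2^w-1},d_1,\dots,d_{2^w-1})$, component types given by label sets, counts capped at $2$, the same join/union/relabel transitions, and the same root test, combined with Lemma~\ref{lb_cw} for the lower bound. If anything, you supply a more explicit justification for why the cap at $2$ survives the merging at join nodes, a point the paper leaves implicit.
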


In order to extend this result to {\sc Largest $st$-Bond}, it is enough to observe that given a tree expression $\mathcal{T}$ of $G$ with width $w$, it is easy to construct a tree expression $\mathcal{T}'$ with width equals $w+2$, where no vertex of $V(G)$ has the same label than either $s$ or $t$. Let $w+1$ be the label of $s$, and let $w+2$ be the label of $t$. By fixing, for each $\ell$, $s_{w+1}=|L_{w+1}(X_\ell)|$ and~$s_{w+2}=0$, one can solve {\sc Largest $st$-Bond} in time $f(w)\times n^{\Oh(w)}$.

\section{Bounding the treewidth of G}

In the remainder of this paper we deal with our main problems: {\sc Largest Bond} and {\sc Largest $st$-Bond} parameterized by the size of the solution ($k$).
Inspired by the principle of preprocessing the input to obtain a kernel, we consider the strategy of preprocessing the input in order to bound the treewidth of the resulting instance.

We start our analysis with {\sc Largest Bond}.

\begin{definition}
A graph $H$ is called a minor of a graph $G$ if $H$ can be formed from $G$ by deleting edges, deleting vertices, and by contracting edges. For each vertex $v$ of $H$, the set of vertices of $G$ that are contracted into $v$ is called a branch set of $H$.
\end{definition}

\begin{lemma}\label{k2kimpliesbond}
Let $G$ be a simple connected undirected graph, and $k$ be a positive integer. If $G$ contains $K_{2,k}$ as a minor then $G$ has a bond of size at least $k$.
\end{lemma}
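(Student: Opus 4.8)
The plan is to work directly with the branch-set description of the $K_{2,k}$ minor and to build an explicit partition $[S,T]$ into two connected parts with at least $k$ crossing edges. Writing $K_{2,k}$ with color classes $\{a,b\}$ and $\{1,\dots,k\}$, the minor gives pairwise-disjoint connected branch sets $B_a,B_b,B_1,\dots,B_k \subseteq V(G)$ such that there is at least one edge between $B_a$ and each $B_i$, and at least one edge between $B_b$ and each $B_i$. First I would observe that $T_0 := B_b \cup B_1 \cup \cdots \cup B_k$ induces a connected subgraph of $G$: each $B_i$ is connected and is joined to the connected set $B_b$ by an edge, so $B_b$ acts as a hub linking all the $B_i$. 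This connected set $T_0$ will be the seed of one side of the bond, and the edges from $B_a$ into the $B_i$ will supply the $k$ crossing edges.

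Next I would delete $B_a$ and look at the connected components of $G - B_a$. Since $T_0$ is connected and disjoint from $B_a$, it is contained in a single such component, call it $D_1$. I then set $T := D_1$ and $S := V(G) \setminus D_1$, noting that $B_a \subseteq S$ because $B_a$ is disjoint from $G - B_a$. The set $T = D_1$ is connected by construction. To see $G[S]$ is connected, I would write $S = B_a \cup (D_2 \cup \cdots \cup D_m)$, where $D_2,\dots,D_m$ are the remaining components of $G - B_a$; each $D_j$ is connected, and since $G$ is connected and $D_j$ is a component of $G - B_a$, it must have an edge to $B_a$. As $B_a$ is itself connected, every $D_j$ attaches to the connected core $B_a$, so $G[S]$ is connected. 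Both $S$ and $T$ are nonempty (they contain $B_a$ and $B_b$ respectively), so $[S,T]$ determines a bond.

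It remains to count crossing edges. Because $B_a \subseteq S$ and every $B_i \subseteq T_0 \subseteq D_1 = T$, each of the (at least one per $i$) edges between $B_a$ and $B_i$ crosses the cut; since the $B_i$ are pairwise disjoint, these edges are distinct, giving $|\partial(S)| \ge k$. This completes the argument.

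I expect the only genuinely delicate step to be the verification that $G[S]$ is connected, since $S$ may pick up leftover vertices of $G$ lying outside all branch sets; the key point there is that collecting \emph{all} components of $G - B_a$ except the one containing $T_0$ into $S$ lets each component reattach to $B_a$, whereas the component $D_1$ is kept whole on the $T$ side so it stays connected. The remaining checks — connectivity of $T_0$ and distinctness of the $k$ crossing edges — are immediate from disjointness of the branch sets.
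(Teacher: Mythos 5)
Your proof is correct and follows essentially the same approach the paper has in mind; the paper's own proof is only a one-line sketch ("from $H$ it is easy to construct a bond..."), and your argument supplies the natural construction it alludes to, including the one genuinely delicate point (reattaching the leftover components of $G - B_a$ to the connected set $B_a$ so that both sides of the cut are connected).
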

\begin{proof}
Let $H$ be a minor of $G$ isomorphic to $K_{2,k}$. Since $G$ is connected and each branch set of $H$ induces a connected subgraph of $G$, from $H$ it is easy to construct a bond of $G$ of size at least $k$.
\end{proof}

Combined with Lemma~\ref{k2kimpliesbond}, the following results show that, without loss of generality, our study on $k$-bonds can be reduced to graphs of treewidth $\Oh(k)$.

\begin{lemma}\cite{bodlaender1997interval}
Every graph $G=(V, E)$ contains $K_{2,k}$ as a minor or has treewidth at most $2k-2$.
\end{lemma}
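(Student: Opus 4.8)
The plan is to prove the contrapositive: if $\mathrm{tw}(G) \ge 2k-1$ then $G$ contains $K_{2,k}$ as a minor; equivalently, if $G$ is $K_{2,k}$-minor-free then it admits a tree decomposition of width at most $2k-2$. First I would build a candidate decomposition directly from a depth-first search tree $T$ of $G$ rooted at some vertex $r$. The key property of a DFS tree is that every non-tree edge joins a vertex to one of its ancestors. For each vertex $v$, let $T_v$ be the subtree rooted at $v$ and let $A(v)$ be the set of \emph{proper} ancestors of $v$ incident with some vertex of $T_v$; set $X_v := \{v\} \cup A(v)$. A routine verification shows that the family $\{X_v\}_{v \in V}$, arranged along $T$, is a tree decomposition of $G$: every tree edge and every back edge is covered by construction, and the bags containing a fixed vertex $u$ form a connected subtree because the predicate ``$u$ is incident with $T_v$'' is inherited when passing from $v$ to its parent. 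This decomposition has width $\max_v |A(v)|$, so it suffices to exhibit a DFS tree for which $|A(v)| \le 2k-2$ for every $v$.

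Next I would show that a vertex $v$ with $|A(v)| \ge 2k-1$ should yield a $K_{2,k}$ minor. Write $A(v) = \{a_1, \dots, a_m\}$, $m \ge 2k-1$, in order of depth along the root-to-$v$ path $P$, and for each $i$ fix a back edge $a_i w_i$ with $w_i \in T_v$. The idea is to exhibit two connected ``hub'' branch sets together with $k$ pairwise disjoint connecting branch sets, each meeting both hubs: contracting $T_v$ and suitable segments of $P$ should create the two hubs, while alternate ancestors $a_i$ (or the path segments between consecutive ones) serve as the $k$ connectors. Equivalently, one can phrase the entire argument through Menger's theorem: for any two disjoint connected sets $X,Y$, a family of $k$ internally disjoint $X$--$Y$ paths contracts to a $K_{2,k}$ minor, so $K_{2,k}$-minor-freeness forces every such pair to be separable by fewer than $k$ vertices, and propagating these small separators through the graph yields balanced separators of bounded size, hence bounded treewidth.

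The main obstacle is precisely the step from ``many attached ancestors'' to ``two genuine hubs''. The attachments $a_i w_i$ may all funnel into a small portion of $T_v$ --- the \emph{fan} configuration, in which a single vertex is adjacent to many points of a path. A fan has \emph{no} $K_{2,k}$ minor for $k \ge 3$ (any connected second hub taken along the path reaches at most two of the branch sets), and yet a carelessly rooted DFS can still make $|A(v)|$ large there. Thus the crux is to choose the DFS root (or, in the Menger formulation, the separating pairs) so that a large $|A(v)|$ genuinely reflects diverse attachments arriving from two well-separated parts of $T_v$, rather than from a single apex. Handling this funneling is what forces the careful, tight accounting, and it is also where the exact constant $2k-2$ --- rather than a weaker $\mathcal{O}(k)$ bound --- is recovered: the factor $2$ mirrors the two hub vertices of $K_{2,k}$, each able to absorb up to $k-1$ vertices of a separator.
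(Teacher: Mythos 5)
First, a point of reference: the paper does not prove this lemma at all; it is imported as a black box from the cited reference \cite{bodlaender1997interval}, so there is no in-paper argument to compare yours against. Your first step --- the tree decomposition built from a DFS tree with bags $X_v=\{v\}\cup A(v)$ --- is correct and standard: in a DFS tree every edge of $G$ joins a vertex to an ancestor, so each edge lands in some bag, and the bags containing a fixed vertex $u$ form a subtree of $T$ because the predicate $N(u)\cap V(T_v)\neq\emptyset$ is inherited from $v$ to every vertex on the tree path from $v$ up to $u$. The width is indeed $\max_v|A(v)|$.

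The gap is exactly where you say it is, and it is not closed. The entire content of the lemma is the implication ``$|A(v)|\ge 2k-1$ (for every admissible DFS tree) implies a $K_{2,k}$ minor,'' and for that you offer only the intention to ``choose the DFS root so that a large $|A(v)|$ genuinely reflects diverse attachments.'' As you yourself observe, after contracting $T_v$ to a single vertex the configuration you are handed is a fan --- a path $P$ plus an apex adjacent to $m\ge 2k-1$ of its vertices --- and a fan has no $K_{2,3}$ minor, since any connected set disjoint from the apex that lives on the path has at most two neighbours outside itself other than the apex. So a large $|A(v)|$ by itself simply does not produce the minor, and no concrete rule for choosing the root (nor an analysis showing such a rule exists) is given. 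The Menger-based fallback has its own holes: $k$ internally disjoint $X$--$Y$ paths yield a $K_{2,k}$ minor only when every path has a nonempty interior (a direct $X$--$Y$ edge contributes an edge between the two hubs, not a degree-two branch vertex), and the step from ``every pair of disjoint connected sets is separated by fewer than $k$ vertices'' to ``treewidth at most $2k-2$'' --- with that exact constant rather than merely $\mathcal{O}(k)$ --- is asserted, not argued. What you have is a correct warm-up construction plus an accurate diagnosis of the difficulty, but not a proof; the lemma should continue to be cited from \cite{bodlaender1997interval} rather than re-derived along these lines.
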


\begin{lemma}\cite{bodlaender1997interval}\label{algoK2k}
There is a polynomial-time algorithm that either concludes that the input graph $G$ contains $K_{2,k}$ as a minor, or outputs a tree-decomposition of $G$ of width at most $2k-2$.
\end{lemma}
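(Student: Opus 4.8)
The plan is to upgrade the preceding (existential) lemma into a constructive, polynomial-time procedure. A first useful observation is that in the first outcome the statement only asks us to \emph{conclude} that a $K_{2,k}$ minor exists, not to exhibit it; so it suffices to (a)~output a tree-decomposition of width at most $2k-2$ whenever one exists, and (b)~otherwise halt and declare the minor present, the latter being justified by the preceding lemma.

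The engine of the algorithm is a polynomial-time link between $K_{2,k}$ minors and vertex-connectivity between connected sets. First I would record the following. Suppose $G$ contains two disjoint, connected, \emph{non-adjacent} vertex sets $A$ and $B$ together with $k$ pairwise internally-vertex-disjoint $A$--$B$ paths. Then $G$ has a $K_{2,k}$ minor: contract each of $A$ and $B$ to a single vertex $a,b$ and take the $k$ path-interiors (each non-empty, since $A,B$ are non-adjacent) as the remaining branch sets; each such interior is connected and adjacent to both $a$ and $b$, so this is exactly a $K_{2,k}$ model. By Menger's theorem, such a family of $k$ disjoint paths exists precisely when every $A$--$B$ vertex separator has size at least $k$, and both the paths and a minimum separator are computable in polynomial time by a single max-flow computation.

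With this engine in hand I would build the tree-decomposition top-down by a standard separator recursion: process the connected pieces of $G$, each carrying a bounded boundary set $W$ that is forced into the top bag of the piece; at each step search (via max-flow) for a separator that splits the current piece while keeping the relevant boundary within budget, place $W$ together with the separator into a bag, and recurse on the resulting smaller pieces with updated boundaries. If every separator search succeeds, the bookkeeping keeps each bag of size at most $2k-1$, giving width at most $2k-2$. If some search fails --- that is, the minimum separator between two connected parts has size at least $k$ --- the engine converts this into $k$ internally-disjoint paths between two connected non-adjacent sets, and we stop and conclude that $G$ contains a $K_{2,k}$ minor. The preceding lemma guarantees that such a failure can occur only when a $K_{2,k}$ minor is genuinely present, so the procedure is never forced to abort spuriously.

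The main obstacle is the constant bookkeeping. One must arrange the recursion and the propagation of boundary vertices so that every bag has size at most $2k-1$ (getting both the multiplicative factor $2$ and the additive $-2$ exactly right): in particular, when we isolate two parts that cannot be separated cheaply, the mutual adjacencies between them have to be pushed into the boundary so that the engine applies to \emph{non-adjacent} sets, and this is precisely where the budget $2k-2$ rather than $k-1$ seems to be consumed. At the same time one must check that each unsuccessful separator search yields an honest certificate of $k$ internally-disjoint paths. By contrast, establishing polynomiality is routine: every separator and path computation is a polynomial-time max-flow, each recursion step strictly shrinks the current piece, so the recursion tree has polynomially many nodes and the total running time is polynomial in $n$ and $k$.
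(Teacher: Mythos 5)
This lemma is not proved in the paper at all: it is imported verbatim from the cited reference (Bodlaender et al.), so there is no in-paper argument to compare against. Judged on its own terms, your sketch has the right flavour --- your ``engine'' is sound: if $A$ and $B$ are disjoint, connected, non-adjacent vertex sets joined by $k$ paths that are pairwise disjoint outside $A\cup B$, then the $k$ (non-empty) path interiors together with $A$ and $B$ form a $K_{2,k}$ model, and Menger plus one max-flow computation makes this checkable in polynomial time. But the proposal has a genuine gap exactly where you flag ``bookkeeping'': the separator recursion is never actually specified. You do not state the invariant on the boundary set $W$ (how large it may be, how it is split and propagated to the sub-pieces), you do not say which pair of sets $A,B$ the max-flow is run between at a given step, and you do not show that a \emph{failed} search produces sets that are connected and non-adjacent as the engine requires. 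Since the entire content of the lemma beyond the existential statement is precisely this invariant and the resulting bound $2k-2$, what you have is a plan for a proof rather than a proof.

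There is also a logical slip worth fixing. You write that ``the preceding lemma guarantees that such a failure can occur only when a $K_{2,k}$ minor is genuinely present.'' It does not: the existential lemma says that a graph of treewidth more than $2k-2$ contains the minor, but the failure of \emph{your particular} recursion to produce a width-$(2k-2)$ decomposition does not certify that the treewidth exceeds $2k-2$ --- an incorrectly designed recursion can fail on low-treewidth inputs. The only legitimate way to justify halting with ``minor present'' is the direct route you also mention, namely that the failed separator search itself yields the $k$ internally disjoint paths between suitable connected non-adjacent sets; and establishing that this is always the case is again the unwritten core of the argument. As it stands, the appeal to the existential lemma is circular window-dressing, and the constructive claim rests entirely on the part you have deferred.
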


From Lemma~\ref{k2kimpliesbond} and Lemma~\ref{algoK2k} it follows that there is a polynomial-time algorithm that either concludes that the input graph $G$ has a bond of size at least $k$, or outputs a tree-decomposition of $G$ of width at most $2k-2$.

\subsection{The st-bond case}

Let $S\subseteq V(G)$ and let $\partial(S)$ be a bond of a connected graph $G$. Recall
that a block is a $2$-vertex-connected subgraph of $G$ which is inclusion-wise
maximal, and a block-cut tree of $G$ is a tree whose vertices represent the blocks and the cut vertices of $G$, and there is an edge in the block-cut tree for each pair of a block and a cut vertex that belongs to that block.
Then, $\partial(S)$ intersects at most one block of~$G$. More
precisely, for any two distinct blocks $B_1$ and $B_2$ of~$G$, if $S \cap V(B_1)
\neq \emptyset$ and $S \cap V(B_1) \neq V(B_1)$, then either $V(B_2) \subseteq
S$, or $V(B_2) \subseteq V \setminus S$. Indeed, if this is not the case, then
either $G[S]$ or $G[V\setminus S]$ would be disconnected.
Thus, to solve {\sc Largest $st$-Bond}, it is enough to consider, individually, each block on the path between $s$ and $t$ in the block-cut tree of $G$. Also, if a block is composed of a single edge, then it is a bridge~in~$G$, which is not a solution for the problem unless $k = 1$.
Thus, we may assume without loss of generality that $G$ is
$2$-vertex-connected.

\begin{lemma} \label{twopaths}
Let $G$ be a $2$-vertex-connected graph. For all $v \in V(G)\setminus\{s,t\}$, there is an $sv$-path and a $tv$-path which are internally disjoint.
\end{lemma}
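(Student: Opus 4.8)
The plan is to recognise this as an instance of the fan lemma: we are asking for a $2$-fan from $v$ to the target set $\{s,t\}$, and I would derive it directly from Menger's theorem through a small auxiliary construction. Throughout, recall that in the $st$-bond setting $s \neq t$, and by hypothesis $v \notin \{s,t\}$, so $s$, $t$, $v$ are three distinct vertices. Form a graph $G'$ from $G$ by adding a single new vertex $w$ whose only two neighbours are $s$ and $t$.

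The first step is to verify that $G'$ is again $2$-vertex-connected, which is what makes Menger applicable. Deleting $w$ returns $G$, which is connected; deleting any other single vertex $u$ leaves $G - u$ connected by the $2$-connectivity of $G$, and keeps $w$ attached to the rest, since $u$ can destroy at most one of the two edges $ws$ and $wt$. Hence no single vertex disconnects $G'$.

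Next I would apply Menger's theorem to the pair $v$, $w$ in $G'$. As $v \notin \{s,t\}$, the vertices $v$ and $w$ are non-adjacent, and since $G'$ is $2$-connected every vertex set separating $v$ from $w$ has size at least two; Menger therefore yields two paths $Q_1$ and $Q_2$ between $v$ and $w$ that are internally disjoint, i.e.\ meet only in the endpoints $v$ and $w$. Because $w$ has exactly the two neighbours $s$ and $t$, the penultimate vertex of each $Q_i$ is one of $s$, $t$; these penultimate vertices must differ, as otherwise the common neighbour would be an internal vertex of both paths, contradicting internal disjointness. Thus one path, say $Q_1$, reaches $w$ via $s$ and the other via $t$.

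Finally, deleting $w$ from the two paths gives an $sv$-path $P_1 = Q_1 - w$ and a $tv$-path $P_2 = Q_2 - w$ lying entirely in $G$. From $V(Q_1) \cap V(Q_2) = \{v,w\}$ we obtain $V(P_1) \cap V(P_2) = \{v\}$, so $P_1$ and $P_2$ share no internal vertex and are internally disjoint, as required. The only mildly delicate point is the routing observation that $Q_1$ and $Q_2$ must enter $w$ through different neighbours; the remainder is the routine check that the auxiliary graph stays $2$-connected so that Menger's theorem can be invoked.
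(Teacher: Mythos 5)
Your proof is correct, but it takes a genuinely different route from the paper's. You prove the statement as an instance of the fan lemma via the standard auxiliary-vertex construction: attach a new vertex $w$ adjacent exactly to $s$ and $t$, check that the resulting graph $G'$ is still $2$-vertex-connected, apply Menger's theorem to the non-adjacent pair $v,w$ to get two internally disjoint $vw$-paths, and observe they must enter $w$ through distinct neighbours. All the required disjointness then falls out of Menger at once; the only things to verify are the $2$-connectivity of $G'$ (which you do correctly -- deleting $w$ leaves $G$, and deleting any other vertex kills at most one of the edges $ws$, $wt$) and the routing observation about the penultimate vertices. The paper instead argues by hand: it takes two internally disjoint $sv$-paths $P_s, P'_s$ (from $2$-connectivity), a $tv$-path $P'_t$ avoiding $s$ (from connectivity of $G-s$), and reroutes $P'_t$ at its first intersection $x$ with $V(P_s \cup P'_s)$, following the path containing $x$ down to $v$; the surviving $sv$-path is the one not containing $x$. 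Your approach is cleaner and generalizes immediately to $k$-fans in $k$-connected graphs, at the cost of introducing an auxiliary graph; the paper's is self-contained and avoids modifying the graph, at the cost of a slightly more delicate rerouting argument. Either proof suffices for the use made of the lemma in Lemma~\ref{st-minor}.
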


\begin{proof}
Since $G$ is $2$-vertex-connected, there are two disjoint $sv$-paths $P_s$ and $P'_s$ and there is a $tv$-path $P_t'$ which does not include $s$. Let $x$ be the first vertex of $P_t'$ which belongs to $V(P_s \cup P_s')$ and assume, w.l.o.g., that $x \in P_s'$. Let $P_t''$ be the sub-path of $P_t'$ from $t$ to $x$ and $P_s''$ the sub-path of $P_s'$ from $x$ to $v$. Now define $P_t$ as $tP_t''xP_s''v$ and notice that $P_t$ is a $tv$-path disjoint from $P_s$.
\end{proof}


\begin{lemma}\label{st-minor}
Let $G$ be a $2$-vertex-connected graph. If $G$ contains $K_{2,2k}$ as a minor,
then there exists $S \subseteq V(G)$ such that $\partial(S)$ is a bond of size
at least $k$.
\end{lemma}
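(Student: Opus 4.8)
The plan is to derive the statement directly from Lemma~\ref{k2kimpliesbond}, which already converts a $K_{2,m}$ minor of a connected graph into a bond of size at least $m$. First I would observe that a $2$-vertex-connected graph is in particular connected, so the hypothesis of Lemma~\ref{k2kimpliesbond} is satisfied. Next, since $K_{2,k}$ is a subgraph of $K_{2,2k}$ and a minor of a minor is again a minor, the existence of a $K_{2,2k}$ minor in $G$ immediately yields a $K_{2,k}$ minor in $G$. Applying Lemma~\ref{k2kimpliesbond} with parameter $k$ then produces a set $S\subseteq V(G)$ with $|\partial(S)|\ge k$; in fact applying it with parameter $2k$ gives a bond of size at least $2k\ge k$, so the conclusion holds with room to spare.

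To make the construction explicit, I would unpack the minor model. Let $A_1,A_2$ be the branch sets of the two degree-$2k$ vertices of $K_{2,2k}$ and let $B_1,\dots,B_{2k}$ be the branch sets of the $2k$ degree-$2$ vertices; each branch set induces a connected subgraph of $G$, and for every $i$ there is at least one edge of $G$ joining $B_i$ to $A_1$ and at least one joining $B_i$ to $A_2$. Taking $S:=A_1$ gives a cut whose cut-set contains the at least $2k$ edges from $A_1$ to the spokes, while $G[S]=G[A_1]$ is connected and $A_2\cup\bigcup_i B_i$ is connected through $A_2$; thus $\partial(S)$ is a bond of size at least $2k\ge k$. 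The only point requiring care is the assignment of the vertices of $G$ lying outside every branch set (those deleted when forming the minor): each such vertex must be routed onto the side $V(G)\setminus S$ along a path in $G$, using connectivity of $G$, without disconnecting $G[S]$. This bookkeeping is exactly what is hidden inside the ``it is easy to construct'' step of Lemma~\ref{k2kimpliesbond}, so in the final write-up I would simply invoke that lemma rather than repeat the argument.

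I do not expect a genuine obstacle here: the terminal-free conclusion is an immediate consequence of Lemma~\ref{k2kimpliesbond}, and the only subtlety, namely keeping both sides of the cut induced-connected so that $\partial(S)$ is a true (inclusion-minimal) bond, is already subsumed by that lemma. It is worth noting that the hypotheses are stronger than the stated conclusion requires: neither $2$-vertex-connectivity nor the factor-$2$ passage from $K_{2,k}$ to $K_{2,2k}$ is needed merely to guarantee a bond of size at least $k$. These strengthenings are what one would exploit if the bond additionally had to separate two prescribed terminals, using the internally disjoint paths supplied by Lemma~\ref{twopaths}, and so they are in place to support the $st$-variant used later in bounding the treewidth.
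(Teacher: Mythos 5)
Your argument is valid for the lemma as literally transcribed: a $K_{2,2k}$ minor contains a $K_{2,k}$ minor, so Lemma~\ref{k2kimpliesbond} applies at once, and your explicit model argument (take $S$ to be the branch set $A_1$, after assuming, as the paper also does, that the minor is obtained by edge deletions and contractions only, so that the branch sets cover $V(G)$ and no stray vertices need routing) is also correct. You moreover diagnosed, accurately, that the two unused hypotheses --- $2$-vertex-connectivity and the doubling from $k$ to $2k$ --- only make sense for a version in which the bond must separate two prescribed terminals.

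That diagnosis, however, is exactly where the genuine gap lies. This lemma sits in the subsection on the $st$-bond case with fixed terminals $s,t$; the paper's own proof constructs a bond $\partial(S)$ with $s\in S$ and $t\notin S$, and Corollary~\ref{st_bounding_treewidth} invokes the lemma precisely in that form (to ``conclude that $B$ has an $s't'$-bond''). Your deferral to ``the $st$-variant used later'' points back at this very lemma: there is no later proof to fall back on, so as a substitute for the paper's argument your proposal leaves the corollary unsupported, since reducing to Lemma~\ref{k2kimpliesbond} gives no control over which side of the cut $s$ and $t$ land on. The missing substance is the paper's case analysis over the branch sets $A$, $B$, $X_1,\dots,X_{2k}$: if $s$ and $t$ lie in distinct ``columns'' $A\cup X_a$ and $B\cup X_b$, the cut with one side $A\cup X_a$ already yields an $st$-bond crossing at least $2k-1\ge k$ edges; if both lie in $U=A\cup X_{2k}$, the paper picks $v\in Q=V(G)\setminus U$, uses $2$-connectivity through Lemma~\ref{twopaths} to obtain internally disjoint $sv$- and $tv$-paths, splits $U$ into connected parts $U_s\ni s$ and $U_t\ni t$ by assigning each vertex according to its first ancestor, in a spanning tree of $G[U]$ directed toward $s$, lying on the maximal prefixes of the two paths inside $U$, and then pigeonholes the at least $2k-1$ edges between $U$ and $Q$ so that one of the cuts $[U_s,\,U_t\cup Q]$ or $[U_t,\,U_s\cup Q]$ is an $st$-bond of size at least $k$. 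This construction is the actual content of the lemma, and it is absent from your write-up.
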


\begin{proof}
Let $G$ be a graph containing a $K_{2,2k}$ as a minor. If $k = 1$, the
statement holds trivially, thus assume $k \ge 2$. Also, since $G$ is connected,
one can assume that this minor was obtained by contracting or removing edges
only, and thus its branch sets contain all vertices of~$G$.
Let $A$ and $B$ be the branch sets corresponding to first side of $K_{2,2k}$,
and let $X_1, X_2, \dots, X_{2k}$ be the remaining branch sets.

First, suppose that $s$ and $t$ are in distinct branch sets. If this is the
case, then there exist distinct indices $a, b \in \{1, \dots, 2k\}$ such that $s
\in A \cup X_a$ and $t \in B \cup X_b$. Now observe that $G[A \cup X_a]$ and
$G[B \cup X_b]$ are connected, which implies an $st$-bond with at least $2k - 1 \ge
k$ edges. Now, suppose that $s$ and $t$ are in the same branch set. In this
case, one can assume without loss of generality that $s, t \in A \cup X_{2k}$.

Define $U = A \cup X_{2k}$ and $Q = V(G) \setminus U$. Observe that $G[U]$ and
$G[Q]$ are connected. Consider an arbitrary vertex $v$ in the set $Q$. Since $G$
is $2$-vertex-connected, Lemma~\ref{twopaths} implies that there exist an
$sv$-path $P_s$ and a $tv$-path $P_t$ which are internally disjoint. Let $P'_s$
and $P'_t$ be maximal prefixes of $P_s$ and $P_t$, respectively, whose vertices
are contained in $U$.

We partition the set $U$ into parts $U_s$ and $U_t$ such that $G[U_s]$ and
$G[U_t]$ are connected.
Since $G[U]$ is connected, there exists a tree $T$ spanning $U$.
Direct all edges of $T$ towards $s$ and
partition $U$ as follows.
Every vertex in $P'_s$ belongs to $U_s$ and every vertex in $P'_t$  belongs to
$U_t$. For a vertex $u \notin V(P'_s \cup P'_t)$, let $w$ be the first
ancestor of $u$ (accordingly to $T$) which is in $P'_s \cup P'_t$. Notice that
$w$ is well-defined since $u \in V(T)$ and the root of $T$ is $s \in V(P'_s \cup
P'_t)$.
Then $u$ belongs to $U_s$ if $w \in V(P'_s)$, and $u$ belongs to $U_t$ if $w \in
V(P'_t)$.

Observe that that there are at least $2k-1$ edges between $U$ and $Q$, and thus
there are at least $k$ edges between $U_s$ and $Q$, or between $U_t$ and $Q$.
Assume the former holds, as the other case is analogous.
It follows that $G[U_s]$ and $G[U_t \cup Q]$ are connected and induce a bond
of~$G$ with at least $k$ edges.
\end{proof}

Lemma~\ref{algoK2k} and Lemma~\ref{st-minor} imply that there is an algorithm
that either concludes that the input graph $G$ has a bond of size at least $k$,
or outputs a tree-decomposition of an equivalent instance $G'$ of width
$\Oh(k)$.

\begin{corollary}\label{st_bounding_treewidth}
Given a graph $G$, vertices $s,t\in V(G)$, and an integer $k$, there exists a polynomial-time algorithm that either concludes that $G$ has an $st$-bond of size at least $k$ or outputs a subgraph $G'$ of $G$ together with a tree decomposition of $G'$ of width equals $\Oh(k)$, such that $G'$ has an $st$-bond of size at least $k$ if and only if $G$ has an $st$-bond of size at least~$k$.
\end{corollary}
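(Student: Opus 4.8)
The plan is to combine the block-cut tree reduction already set up in this section with the minor test of Lemma~\ref{algoK2k} and the structural guarantee of Lemma~\ref{st-minor}, working block by block. First I would compute the block-cut tree of $G$ in polynomial time and isolate the blocks $B_1, \dots, B_p$ lying on the unique path between $s$ and $t$; writing $c_1, \dots, c_{p-1}$ for the cut vertices shared by consecutive blocks, I would assign to each $B_i$ a pair of terminals $(s_i,t_i)$ by setting $s_1 = s$, $t_p = t$, $s_i = c_{i-1}$ for $i>1$, and $t_i = c_i$ for $i<p$. As recorded in the discussion preceding Lemma~\ref{twopaths}, every bond of $G$ meets at most one block, so an $st$-bond of $G$ must live inside a single block that separates $s$ from $t$, i.e. one of the $B_i$. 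This yields the equivalence I would establish first: $G$ has an $st$-bond of size at least $k$ if and only if some $B_i$ has an $s_it_i$-bond of size at least $k$.

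Next I would run the algorithm of Lemma~\ref{algoK2k} on each $B_i$ with parameter $2k$. If some $B_i$ contains $K_{2,2k}$ as a minor, then, since $B_i$ is $2$-vertex-connected, Lemma~\ref{st-minor} produces an $s_it_i$-bond of $B_i$ of size at least $k$; by the equivalence above this lifts to an $st$-bond of $G$ of size at least $k$, so the algorithm halts and reports that $G$ has an $st$-bond of size at least $k$. Otherwise, the same call returns a tree decomposition of $B_i$ of width at most $2(2k)-2 = 4k-2 = \Oh(k)$ (trivial bridge-blocks contribute width $1$ and may be discarded when $k \ge 2$).

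If no block triggers the positive answer, I would set $G'$ to be the union of the path-blocks $B_1, \dots, B_p$, which is a subgraph of $G$. Because consecutive path-blocks share only the cut vertex $c_i$ and, since the block-cut tree is a tree, non-consecutive path-blocks are vertex-disjoint in $G'$, the lifting argument applies verbatim inside $G'$, giving that $G'$ has an $st$-bond of size at least $k$ exactly when some $B_i$ does; hence $G'$ is equivalent to $G$. To obtain a decomposition of width $\Oh(k)$, I would glue the individual decompositions $T_i$ along the path: for each $i$, add a tree edge between a bag of $T_i$ containing $c_i$ and a bag of $T_{i+1}$ containing $c_i$. Validity follows because $c_i$ belongs to $V(B_i)\cap V(B_{i+1})$ and to no other path-block, so running intersection is preserved; the width is the maximum over the $T_i$, namely $\Oh(k)$.

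The routine parts are the block-cut tree computation and the gluing of decompositions, which are standard. The main obstacle is the structural equivalence, and in particular its lifting direction: given an $s_it_i$-bond of a single block, one must extend the corresponding vertex set to all of $G$ (resp. $G'$) by sending each branch hanging off a cut vertex of $B_i$ to the side of that cut vertex, and then verify that both sides stay connected and that no new crossing edges are created. Getting this bookkeeping right---especially placing $s$ and $t$ on the correct sides through the choices $s_i = c_{i-1}$ and $t_i = c_i$---is where the care is needed; everything else reduces to applying Lemma~\ref{algoK2k} and Lemma~\ref{st-minor} with the parameter doubled.
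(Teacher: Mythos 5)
Your proposal is correct and follows essentially the same route as the paper: compute the block-cut tree, restrict to the blocks on the $s$--$t$ path with per-block terminals nearest to $s$ and $t$, apply Lemma~\ref{algoK2k} with parameter $2k$ together with Lemma~\ref{st-minor} to each block, and glue the resulting width-$\Oh(k)$ decompositions along the cut vertices. The only (immaterial) difference is that the paper joins the block decompositions via an extra singleton bag $\{c_i\}$ per cut vertex rather than a direct edge between bags, and it leaves the lifting of a single-block $s_it_i$-bond to an $st$-bond of $G'$ implicit, relying on the earlier observation that any bond meets at most one block.
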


\begin{proof}
Find a block-cut tree of $G$ in linear time~\cite{CormenLRS01}, and let $B_s$
and $B_t$ be the blocks of $G$ that contain $s$ and $t$, respectively. Remove
each block that is not in the path from $B_s$ to $B_t$ in the block-cut tree
of~$G$. Let $G'$ be the remaining graph.
For each block $B$ of $G'$, consider the vertices $s'$ and $t'$ of $B$ which are
nearest to $s$ and $t$, respectively. Using
Lemmas~\ref{algoK2k}~and~\ref{st-minor} one can in polynomial time either
conclude that $B$ has an \mbox{$s't'$-bond}, in which case $G$ is a
yes-instance, or compute a tree decomposition of $B$ with width at most
$\Oh(k)$.

Now, construct a tree decomposition of $G'$ as follows. Start with the union of
the tree decompositions of all blocks of $G'$. Next, create a bag $\{u\}$ for
each cut vertex $u$ of $G'$.
Finally, for each cut vertex $u$ and any bag corresponding to a block $B$
connected through $u$, add an edge between~$\{u\}$ and one bag of the tree
decomposition of $B$ containing $u$.
Note that this defines a tree decomposition of $G'$ and that each bag has at
most $\Oh(k)$ vertices.
\end{proof}


Note that since $k$-bonds are solutions for {\sc Connected Max Cut}, the results presented in this section naturally apply to such a problem as well.

\section{Taking the treewidth as parameter}

In the following, given a tree decomposition $\mathcal{T}$, we denote by $\ell$ one node of~$\mathcal{T}$ and by $X_\ell$ the vertices contained in the \emph{bag} of $\ell$. We assume w.l.o.g that $\mathcal{T}$ is a extended version of a \emph{nice} tree decomposition (see~\cite{cygan2015parameterized}), that is, we assume that there is a special root node~$r$ such that $X_\ell = \emptyset$ and all edges of the tree are directed towards $r$ and each node~$\ell$ has one of the following five types: {\em Leaf}\,; {\em Introduce vertex}; {\em Introduce edge}; {\em Forget vertex}; and {\em Join}.
%
%
%
%
%
%
Moreover, define $G_\ell$ to be the subgraph of $G$ which contains only vertices and edges that have been introduced in $\ell$ or in a descendant of $\ell$.

The number of partitions of a set of $k$ elements is the \textit{$k$-th Bell number}, which we denote by $B(k)$ ($B(k)\leq k!$~\cite{bell}).

\begin{theorem}\label{dynamic}
Given a nice tree decomposition of~$G$ with width $tw$, one can find a bond of maximum size in time
$2^{\Oh(tw\log{tw})}\times n$
where $n$ is the
number of vertices of~$G$.
\end{theorem}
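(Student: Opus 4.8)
The plan is to design a bottom-up dynamic program over the given nice tree decomposition $\mathcal{T}$, in which the table at each node records, for every way the sought bond can look when restricted to the current bag, the largest number of crossing edges realized so far. A partial solution at node $\ell$ is a split of $V(G_\ell)$ into a prospective $S$-side and $T$-side together with the connected components of $G_\ell[S\cap V(G_\ell)]$ and of $G_\ell[T\cap V(G_\ell)]$. Since only the bag vertices $X_\ell$ can still interact with the rest of the graph, I would index each entry by (i) a two-colouring $\sigma\colon X_\ell\to\{S,T\}$, (ii) a partition $\mathcal{P}_S$ of $\sigma^{-1}(S)$ whose blocks are the traces of the components of $G_\ell[S\cap V(G_\ell)]$, and (iii) an analogous partition $\mathcal{P}_T$ of $\sigma^{-1}(T)$; the stored value is the maximum size of the cut-set $\partial(S\cap V(G_\ell))$ in $G_\ell$ over all partial solutions consistent with $(\sigma,\mathcal{P}_S,\mathcal{P}_T)$. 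Crucially, the cut size is kept as the \emph{value} rather than an index, so the number of entries per node is at most $2^{|X_\ell|}\cdot B(|X_\ell|)^2\le 2^{tw+1}\,((tw+1)!)^2=2^{\Oh(tw\log tw)}$, using $B(k)\le k!$; this is exactly what drives the claimed running time.

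Next I would specify the transitions. A \emph{leaf} has empty bag and a single trivial entry. At an \emph{introduce vertex} $v$ node, $v$ is isolated in $G_\ell$, so I branch on placing $v$ in $S$ or in $T$, adding $\{v\}$ as a singleton block to the corresponding partition and leaving the value unchanged. At an \emph{introduce edge} $\{u,v\}$ node I consult the colouring: if $\sigma(u)=\sigma(v)$ the edge lies inside one side, so I merge the blocks of $u$ and $v$ in that side's partition and keep the value; if $\sigma(u)\neq\sigma(v)$ the edge crosses the cut, so the partitions are untouched and the value increases by one. At a \emph{forget vertex} $v$ node I drop $v$ from the bag and from its block; the delicate case is when $v$ is the only bag vertex of its block, for then the corresponding component becomes permanently separated from everything introduced later. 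Finally, at a \emph{join} node with children $\ell',\ell''$ sharing the bag, I keep only pairs of entries with identical colourings, set the new $S$-partition (resp.\ $T$-partition) to be the common coarsening of $\mathcal{P}_S',\mathcal{P}_S''$ (resp.\ $\mathcal{P}_T',\mathcal{P}_T''$) obtained as the transitive closure of the union of the two equivalence relations, and add the two stored values; this is correct because the extended decomposition introduces each edge exactly once, so no crossing edge is double counted across the branches.

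The main obstacle is enforcing that both sides end up \emph{globally} connected rather than merely tracking components locally. My plan is to augment the state with two flags recording whether a component on the $S$-side (resp.\ $T$-side) has already been \emph{closed}, that is, fully forgotten; closing a component is permitted at most once per side, and any attempt to close a second component on a side, or to later introduce a vertex on a side whose component is already closed, makes the entry invalid. A maximum bond is then read off at the root $r$, where $X_r=\emptyset$: I take the largest value among entries in which exactly one $S$-component and exactly one $T$-component have been closed and both sides are nonempty, which encodes precisely that $G[S]$ and $G[V\setminus S]$ are each connected and that $S$ is a proper nonempty subset of $V(G)$. For the running time, each of the $\Oh(tw\cdot n)$ nodes of $\mathcal{T}$ is processed in time proportional to the table size, the join being the costliest since it ranges over pairs of entries and computes partition coarsenings; this is $2^{\Oh(tw\log tw)}$ per node, and since the polynomial factor $tw$ is dominated by $2^{\Oh(tw\log tw)}$, the overall time is $2^{\Oh(tw\log tw)}\times n$, as claimed.
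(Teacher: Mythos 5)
Your proposal is correct and follows essentially the same route as the paper: a dynamic program over the nice tree decomposition indexed by a bipartition of the bag together with the two partitions recording component traces on each side, with the Bell-number bound $B(tw+1)\le (tw+1)!$ giving the $2^{\Oh(tw\log tw)}$ state count. The only cosmetic difference is how a fully forgotten component is recorded — you use explicit ``closed'' flags per side, while the paper encodes this as the special partition value $\{\emptyset\}$ and discards states with $\{\emptyset\}\subsetneq\rho_i$ — and your observation that introduce-edge nodes make the join's edge count additive without correction is a slight (valid) simplification of the paper's join formula.
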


\begin{proof}
Let $\partial_{G}(U)$ be a bond of $G$, and $[U, V\setminus U]$ be the cut defined by such a bond.
Set $S^\ell_U=U\cap X_\ell$. The removal of $\partial_{G}(U)$ partitions $G_\ell[U]$ into a set $C^\ell_{U}$ of connected components, and $G_\ell[V\setminus U]$ into a set $C^\ell_{V\setminus U}$ of connected components. Note that~$C^\ell_{U}$ and~$C^\ell_{V\setminus U}$ define partitions of $S^\ell_U$ and $X_\ell\setminus S^\ell_U$, denoted by $\rho^\ell_1$~and $\rho^\ell_2$ respectively, where the intersection of each connected component of $C^\ell_{U}$ with $S^\ell_U$ corresponds to one part of $\rho^\ell_1$. The same holds for~$C^\ell_{V\setminus U}$ with respect to $X_\ell \setminus S^\ell_U$ and $\rho^\ell_2$.

We define a table for which an entry $c[\ell, S, \rho_1, \rho_2]$ is the size of a largest cut-set (partial solution) of the subgraph~$G_\ell$, where~$S$ is the subset of $X_\ell$ to the left part of the bond, $ X_\ell \setminus S$ is the subset to the right part, and
$\rho_1$, $\rho_2$ are the partitions of $S$ and  $X_\ell \setminus S$ representing, after the removal of the partial solution, the intersection with the connected components to the left and to the right, respectively.
If there is no such a partial solution then $c[\ell, S, \rho_1, \rho_2] = -\infty$.

For the case that $S$ is empty, two special cases may occur: either $U\cap V(G_\ell) = \emptyset$,  in which case there are no connected components in $C^\ell_U$, and thus ${\rho_1 = \emptyset}$; or $C^\ell_U$ has only one connected component which does not intersect $X_\ell$, i.e., $\rho_1=\{\emptyset\}$, this case means that the connected component in $C^\ell_U$ was completely forgotten.
Analogously, we may have $\rho_2 = \emptyset$ and $\rho_2=\{\emptyset\}$.
%
%
Note that we do not need to consider the case $\{\emptyset\}\subsetneq \rho_i$ since it would imply in a disconnected solution. The largest bond of a connected graph $G$ corresponds to the root entry $c[r, \emptyset, \{\emptyset\}, \{\emptyset\}]$.

%
To describe a dynamic programming algorithm, we only need to present the recurrence relation for each node type.


\medskip

\textbf{Leaf:} In this case, $X_\ell = \emptyset$. There are a few combinations for $\rho_1$ and $\rho_2$: either $\rho_1 = \emptyset$, or $\rho_1 = \{\emptyset\}$, and either $\rho_2 = \emptyset$, or $\rho_2 = \{\emptyset\}$.
Since for this case  $G_\ell$ is empty, there can be no connected components, so having $\rho_1 = \emptyset$ and $\rho_2 = \emptyset$ is the only feasible choice.
\[
c[\ell, S, \rho_1, \rho_2] =
\begin{cases}
0  & \text{if $S = \emptyset$, $\rho_1 = \emptyset$ and $\rho_2 = \emptyset$},\\
-\infty & \text{otherwise}.
\end{cases}
\]

\smallskip

\textbf{Introduce vertex:} 
We have only two possibilities in this case, either $v$ is an isolated vertex to the left ($v \in S$) or it is an isolated vertex to the right ($v \notin S$). Thus, a partial solution on $\ell$ induces a partial solution on $\ell'$, excluding $v$ from its part. 
\[
c[\ell, S, \rho_1, \rho_2] =
\begin{cases}
c[\ell', S\setminus\{v\}, \rho_1\setminus\{\{v\}\}, \rho_2] & \text{if $\{v\} \in \rho_1$}, \\
c[\ell', S, \rho_1, \rho_2\setminus\{\{v\}\}] & \text{if $\{v\} \in \rho_2$}, \\
- \infty & \text{if $\{v\} \notin \rho_1 \cup \rho_2$}. \\
\end{cases}
\]

\textbf{Introduce edge:} In this case, either the edge $\{u,v\}$ that is being inserted is incident with one vertex of each side, or the two endpoints are at the same side. In the former case, a solution on $\ell$ corresponds to a solution on $\ell'$ with the same partitions, but with value increased. In the latter case, edge~$\{u,v\}$ may connect two connected components of a partial solution on $\ell'$.
\[
c[\ell, S, \rho_1, \rho_2] =
\begin{cases}
c[\ell', S, \rho_1, \rho_2] + 1 & \text{if $u \in S$ and $v \notin S$ or $u \notin S$ and $v \in S$} ,\\
max_{ \rho_1'}\{c[\ell', S, \rho_1', \rho_2]\} & \text{if $u \in S$ and $v \in S$}, \\
max_{ \rho_2'}\{c[\ell', S, \rho_1, \rho_2']\} & \text{if $u \notin S$ and $v \notin S$}.
\end{cases}
\]
Here, $\rho_1'$ spans over all refinements of $\rho_1$ such that the union of the parts containing $u$ and $v$ results in the partition $\rho_1$. The same holds for $\rho_2'$.


\smallskip

\textbf{Forget vertex:} In this case, either the forgotten vertex $v$ is in the left side of the partial solution induced on $\ell$, or is in the right side. Thus, $v$ must be in the connected component which contains some part of $\rho_1$, or some part of $\rho_2$. We select the possibility that maximizes the value
\[
c[\ell, S, \rho_1, \rho_2] = max_{ \rho_1',\rho_2'}\{ c[\ell', S \cup \{v\}, \rho_1', \rho_2], c[\ell', S, \rho_1, \rho_2'] \}.
\]
Here, $\rho_1'$ spans over all partitions obtained from~$\rho_1$ by adding $v$ in some part of~$\rho_1$ (if $\rho_1=\{\emptyset\}$ then $\rho_1'=\{v\}$). The same holds for $\rho_2'$.


\smallskip

\textbf{Join:} This node represents the join of two subgraphs $G_{\ell'}$ and $G_{\ell''}$ and ${X_\ell = X_{\ell'} = X_{\ell''}}$.
By counting the bond edges contained in $G_{\ell'}$ and in
$G_{\ell''}$, each edge is counted at least once, but edges in $X_{\ell}$ are
counted twice. Thus
\[
c[\ell, S, \rho_1, \rho_2] = max\{c[\ell', S, \rho_1', \rho_2'] + c[\ell', S, \rho_1'', \rho_2'']\} - |\{\{u,v\} \in E, u \in S, v \in X_{\ell} \setminus S\}|.
\]
In this case, we must find the best combination between the two children. Namely, for $i\in\{1,2\}$, we consider combinations of $\rho_i'$ with $\rho_i''$ which merge into $\rho_i$. If $\rho_i=\{\emptyset\}$ then either $\rho_i'=\{\emptyset\}$ and $\rho_i''=\emptyset$; or $\rho_i'=\emptyset$ and $\rho_i''=\{\emptyset\}$. Also, if $\rho_i=\emptyset$ then $\rho_i'=\emptyset$ and $\rho_i''=\emptyset$.

\bigskip

The running time of the dynamic programming algorithm can be estimated as follows. The number of nodes in the decomposition is $\Oh(tw\times n)$~\cite{cygan2015parameterized}.
For each node~$\ell$, the parameters $\rho_1$ and $\rho_2$ induce a partition of $X_\ell$; the number of partitions of $X_\ell$ is given by the corresponding Bell number, $B(|X_\ell|) \le B(tw+1)$.
Each such a partition $\rho$ corresponds to a number of choice of parameter $S$ that corresponds to a subset of the parts of $\rho$; thus the number of choices for $S$ is not larger than ${2^{|\rho|} \le 2^{|X_\ell|} \le 2^{tw+1}}$.
Therefore, we conclude that the table size is at most $\Oh(B(tw+1) \times 2^{tw} \times tw \times n)$.
Since each entry can be computed in $2^{\Oh(tw\log tw)}$ time, the total complexity is $2^{\Oh(tw\log{tw})}\times n$. The correctness of the recursive formulas is straightforward.
\end{proof}

The reason for the $2^{\Oh(tw \log tw)}$ dependence on treewidth is because we
enumerate all partitions of a bag to check connectivity.
However, one can obtain single exponential-time dependence by modifying the
presented algorithm using techniques based on Gauss elimination, as described
in~\cite[Chapter~11]{cygan2015parameterized} for {\sc Steiner Tree}.

\begin{theorem}\label{th:st_bond_tw}
{\sc Largest $st$-Bond} is fixed-parameter tractable when parameterized by treewidth.
\end{theorem}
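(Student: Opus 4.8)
The plan is to adapt the dynamic programming algorithm of Theorem~\ref{dynamic} so that it counts only bonds that separate $s$ and $t$. Recall that in that algorithm each table entry $c[\ell, S, \rho_1, \rho_2]$ records, for the subgraph $G_\ell$, a partial solution in which $S = U \cap X_\ell$ is the portion of the current bag placed on the left side of the cut and $X_\ell \setminus S$ is the portion on the right side, while $\rho_1$ and $\rho_2$ encode the connectivity of the two sides. The key observation is that a bond $\partial(U)$ of a connected graph induces exactly two connected components, namely $G[U]$ and $G[V \setminus U]$; hence $\partial(U)$ disconnects $s$ from $t$ if and only if exactly one of $s, t$ lies in $U$. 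Therefore it suffices to force $s$ onto the left side and $t$ onto the right side of every partial solution.

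First I would impose this constraint at the points where $s$ and $t$ enter the computation. At the introduce-vertex node for $s$, I keep only those entries in which $s \in S$; at the introduce-vertex node for $t$, I keep only those entries in which $t \notin S$. Equivalently, throughout the table I discard every entry with $s \in X_\ell \setminus S$ or with $t \in S$. All other recurrences---for leaf, introduce-edge, forget-vertex, and join nodes---remain exactly as in the proof of Theorem~\ref{dynamic}, since the placement of $s$ and $t$ is simply propagated by the existing transitions. The optimum $st$-bond is then read off from the root entry $c[r, \emptyset, \{\emptyset\}, \{\emptyset\}]$ as before.

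For correctness, note that restricting the state space in this way enumerates precisely the bonds $\partial(U)$ with $s \in U$ and $t \notin U$, which by the observation above are exactly the $st$-bonds of $G$; the stored value is the size of a largest such bond. The running time is unaffected, since we only prune table entries and leave the transitions and the bag-wise counting of partitions unchanged, so the algorithm still runs in $2^{\Oh(tw \log tw)} \times n$ time. I do not expect a genuine obstacle here; the only point that needs care is checking that pinning $s$ and $t$ to opposite sides is consistent with the connectivity bookkeeping carried by $\rho_1$ and $\rho_2$, which it is, because $s$ and $t$ are treated as ordinary vertices whose side is merely fixed in advance rather than freely chosen. Combined with Corollary~\ref{st_bounding_treewidth}, this also yields fixed-parameter tractability of {\sc Largest $st$-Bond} parameterized by the solution size~$k$.
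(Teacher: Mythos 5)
Your proposal is correct and matches the paper's proof in essence: the paper likewise reuses the dynamic program of Theorem~\ref{dynamic} and simply fixes $s\in S$ and $t\notin S$ (it does so by adding $s$ and $t$ to every bag, whereas you prune the offending table entries at the nodes where $s$ and $t$ appear --- an equivalent implementation detail). No gap; your global formulation ``discard every entry with $s \in X_\ell \setminus S$ or $t \in S$'' correctly handles the fact that a vertex may be introduced in several branches of a nice tree decomposition.
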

\begin{proof}
The solution of {\sc Largest $st$-Bond} can be found by a dynamic programming as presented in Theorem~\ref{dynamic} where we add $s$ and $t$ in all the nodes and we fix~$s\in S$ and~$t\notin S$.
\end{proof}

\vspace*{-1mm}

Finally, the following holds.

\begin{corollary}
{\sc Largest Bond} and {\sc Largest $st$-Bond} are fixed-parameter tractable when parameterized by the size of the solution, $k$.
\end{corollary}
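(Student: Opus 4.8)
The plan is to combine the win/win preprocessing with the treewidth dynamic programming established above. First I would treat \textsc{Largest Bond}. Given $(G,k)$, run the polynomial-time algorithm of Lemma~\ref{algoK2k}, which produces one of two outcomes. If it reports that $G$ contains $K_{2,k}$ as a minor, then by Lemma~\ref{k2kimpliesbond} the graph $G$ already has a bond of size at least $k$, so I can immediately answer \emph{yes}. Otherwise the algorithm returns a tree decomposition of $G$ of width at most $2k-2$.

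In the second case the treewidth is $\Oh(k)$, so I would convert this decomposition into a nice tree decomposition in linear time and feed it to the dynamic programming of Theorem~\ref{dynamic}. That algorithm computes the size of a largest bond exactly in time $2^{\Oh(tw\log tw)}\times n = 2^{\Oh(k\log k)}\times n$; comparing this value against $k$ decides the instance. Since each branch runs in time $f(k)\times n^{\Oh(1)}$ for $f(k)=2^{\Oh(k\log k)}$, the whole procedure is fixed-parameter tractable in $k$.

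For \textsc{Largest $st$-Bond} I would argue analogously, but using the $st$-tailored preprocessing. Applying Corollary~\ref{st_bounding_treewidth} to $(G,s,t,k)$ either certifies the existence of an $st$-bond of size at least $k$ (answer \emph{yes}), or yields an equivalent subgraph $G'$ together with a tree decomposition of width $\Oh(k)$. In the latter case I invoke the treewidth algorithm of Theorem~\ref{th:st_bond_tw}, which solves \textsc{Largest $st$-Bond} in fixed-parameter time with respect to treewidth, hence in time $2^{\Oh(k\log k)}\times n$ on $G'$. Because $G'$ is a yes-instance if and only if $G$ is, this decides the original instance.

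The argument has no genuinely hard step: all the heavy lifting is already done by the minor/treewidth dichotomy (Lemmas~\ref{k2kimpliesbond} and~\ref{algoK2k}, and Corollary~\ref{st_bounding_treewidth}) and by the treewidth dynamic programs (Theorems~\ref{dynamic} and~\ref{th:st_bond_tw}). The only point requiring care is to confirm that the two outcomes of the preprocessing jointly decide the problem correctly, and that the width bound $2k-2$ (respectively $\Oh(k)$) makes $f(tw)$ a function of $k$ alone, so that the total running time has the form $f(k)\times n^{\Oh(1)}$ required for fixed-parameter tractability.
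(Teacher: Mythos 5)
Your proposal is correct and follows exactly the same route as the paper, which proves the corollary by citing precisely the chain Lemma~\ref{k2kimpliesbond}, Lemma~\ref{algoK2k}, Corollary~\ref{st_bounding_treewidth}, Theorem~\ref{dynamic} and Theorem~\ref{th:st_bond_tw}; you have simply spelled out the win/win combination in more detail. No issues.
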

\begin{proof}
Follows from Lemma~\ref{k2kimpliesbond}, Lemma~\ref{algoK2k}, Corollary~\ref{st_bounding_treewidth}, Theorem~\ref{dynamic} and Theorem~\ref{th:st_bond_tw}.
\end{proof}

\section{Infeasibility of polynomial kernels}

As seen previously, any bond $\partial(S)$ of a graph $G$ intersects at most one of its block.
Thus, an or-composition for {\sc Largest Bond} parameterized by $k$ can be done from the disjoint union of~$\ell$ inputs, by selecting exactly one vertex of each input graph and contracting them into a single vertex.
Now, let $(G_1,k,s_1,t_1),(G_2,k,s_2,t_2),\ldots,(G_\ell,k,s_\ell,t_\ell)$ be $\ell$ instances of {\sc Largest $st$-Bond} parameterized by $k$.
An or-composition for {\sc Largest $st$-Bond} parameterized by $k$ can be done from the disjoint union of $G_1,G_2,\ldots,G_\ell$, by contracting $t_i,s_{i+1}$ into a single vertex, $1\leq i\leq \ell-1$, and setting $s=s_1$ and $t=t_\ell$.

Therefore, the following holds.

\begin{theorem}
{\sc Largest Bond} and {\sc Largest $st$-Bond} do not admit polynomial kernel unless NP $\subseteq$ coNP/poly.
\end{theorem}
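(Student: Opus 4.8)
The plan is to establish the non-existence of polynomial kernels via the standard technique of constructing an \emph{or-composition}, whose existence together with NP-hardness of the problem rules out polynomial kernels unless NP $\subseteq$ coNP/poly. Recall that an or-composition takes $\ell$ instances of a parameterized problem, all sharing the same parameter $k$, and produces in polynomial time a single instance whose parameter is bounded polynomially in $k + \log \ell$, and which is a yes-instance if and only if at least one of the $\ell$ input instances is a yes-instance. The crucial structural fact that I would exploit is the one established earlier in the paper: every bond $\partial(S)$ of a connected graph intersects at most one block of the graph. This locality is exactly what makes a disjoint-union-style composition possible, since a bond living entirely inside one block of the composed graph corresponds to a bond of the original input from which that block came.

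First I would treat \textsc{Largest Bond}. Given $\ell$ instances $(G_1,k),\dots,(G_\ell,k)$, I would form a single connected graph by taking the disjoint union of the $G_i$ and then, choosing one vertex from each input graph, contracting (or identifying) these $\ell$ chosen vertices into a single common vertex $v^\*$. The resulting graph $G$ is connected, each original $G_i$ becomes (essentially) a block or a union of blocks attached at the cut vertex $v^\*$, and the new parameter remains $k$. Any bond of $G$ of size at least $k$ must, by the block-intersection property, be confined to a single block, hence corresponds to a bond of size at least $k$ in exactly one of the $G_i$; conversely a large bond in some $G_i$ yields one in $G$. This gives the desired or-composition. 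For \textsc{Largest $st$-Bond}, given instances $(G_i,k,s_i,t_i)$, I would instead arrange the graphs in a path-like fashion: take the disjoint union and contract $t_i$ with $s_{i+1}$ for $1 \le i \le \ell-1$, then set $s = s_1$ and $t = t_\ell$. An $st$-bond must separate $s_1$ from $t_\ell$, and since it lies in a single block along the $s$--$t$ path in the block-cut tree, it must be an $s_it_i$-bond for some single $i$; this yields the composition with parameter still $k$.

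The step I expect to be the main obstacle is verifying the correctness of the equivalence in the $st$-bond direction, namely that a large $st$-bond in the composed graph really does localize to a single $s_it_i$-bond rather than somehow spreading across several consecutive gadgets through the contracted cut vertices. This requires invoking the block-cut tree analysis carefully: the contracted vertices $t_i = s_{i+1}$ are cut vertices of $G$, the blocks of $G$ are precisely (the relevant blocks of) the original $G_i$, and any bond separating $s$ from $t$ must cross a single block on the unique $B_s$--$B_t$ path, exactly as argued in Corollary~\ref{st_bounding_treewidth}. Once this localization is in place, both directions of the equivalence are routine, and combining the or-compositions with the NP-completeness established in Theorem~\ref{bipartite} (or Theorem~\ref{framework}) yields the claimed obstruction to polynomial kernels.
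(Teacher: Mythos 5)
Your proposal matches the paper's argument essentially verbatim: the paper also builds an or-composition for \textsc{Largest Bond} by identifying one chosen vertex from each input graph into a single cut vertex, and for \textsc{Largest $st$-Bond} by chaining the instances via contracting $t_i$ with $s_{i+1}$ and setting $s=s_1$, $t=t_\ell$, with correctness in both cases resting on the fact that any bond intersects at most one block. Your elaboration of the localization argument is exactly the justification the paper relies on (implicitly, via its earlier block-cut tree discussion), so there is nothing to add.
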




\bibliography{bond}

\end{document}